%% file: main.tex
\documentclass{tlp}

\usepackage{graphicx}
\usepackage{multirow}
\usepackage{listings}
\usepackage{url}
\usepackage{xcolor}
\usepackage{soul}
\usepackage{tikz}
\usetikzlibrary{arrows.meta, positioning, calc, automata, fit}
\usetikzlibrary{shapes.misc}
\tikzset{cross/.style={cross out, draw=black, minimum size=2*(#1-\pgflinewidth), inner sep=0pt, outer sep=0pt},
cross/.default={1pt}}

\usepackage{booktabs}
\usepackage{pgfplots}
\usepackage{cleveref}
\crefname{equation}{Eqn.}{Eqn.}
\crefname{figure}{Fig.}{Fig.}
\crefname{table}{Table}{Table}
\crefname{section}{Sec.}{Sec.}
\usepackage{subcaption}
\usepackage{comment}
\pgfplotsset{compat=1.18}
\usepackage{float}
\usepackage{amssymb}
\usepackage{mathtools}

\lstdefinelanguage{souffle}{
  sensitive=true,
  morecomment=[l]{//},
  morecomment=[s]{/*}{*/},
  morestring=[b]",
  alsoletter={.:-_},
  keywords=[1]{.decl,.input,.output,.type,.number,.symbol,.pragma,.include},
  keywords=[2]{rule,if,then,else,true,false},
  keywords=[3]{and,or,not,exists},
  keywords=[4]{count,sum,min,max,mean},
  keywordstyle=[1]\color{blue}\bfseries,
  keywordstyle=[2]\color{teal}\bfseries,
  keywordstyle=[3]\color{purple}\bfseries,
  keywordstyle=[4]\color{orange}\bfseries,
  basicstyle=\ttfamily\footnotesize,
  columns=fullflexible,
  keepspaces=true,
  showstringspaces=false,
  breaklines=true,
  breakatwhitespace=true,
  tabsize=2,
}

\lstdefinestyle{souffleStyle}{
  language=souffle,
  numbers=left,
  numberstyle=\tiny\color{gray},
  stepnumber=1,
  numbersep=8pt,
  frame=single,
  rulecolor=\color{black!20},
  captionpos=b,
}

\newcommand{\dtinline}[1]{\lstinline[style=souffleStyle]!#1!}

\newtheorem{definition}{Definition}

\newtheorem{invariant}{Invariant}
\newtheorem{lemma}{Lemma}

\newcommand{\F}[1]{\mathcal{F}_{\text{#1}}}
\newcommand{\visrel}{\xrightarrow{\mathrm{vis}}}

\newcommand{\add}[1]{\texttt{add}(#1)}
\newcommand{\del}[1]{\texttt{del}(#1)}
\newcommand{\hasV}[1]{\texttt{hasV}(#1)}

\newcommand{\upd}[2]{\texttt{upd}(#1,#2)}
\newcommand{\rmv}[1]{\texttt{rmv}(#1)}
\newcommand{\hasKV}[2]{\texttt{hasKV}(#1,#2)}

\newcommand{\addN}[1]{\texttt{addN}(#1)}
\newcommand{\rmvN}[1]{\texttt{rmvN}(#1)}
\newcommand{\addE}[2]{\texttt{addE}(#1,#2)}
\newcommand{\rmvE}[2]{\texttt{rmvE}(#1,#2)}
\newcommand{\hasN}[1]{\texttt{hasN}(#1)}
\newcommand{\hasE}[2]{\texttt{hasE}(#1,#2)}

\newcommand{\idg}{\texttt{IDGraph}}
\newcommand{\ddg}{\texttt{DDGraph}}
\newcommand{\nodeset}{\texttt{NodeSet}}
\newcommand{\edgemap}{\texttt{EdgeMap}}
\newcommand{\valset}{\texttt{ValSet}}
\newcommand{\keyset}{\texttt{KeySet}}
\newcommand{\edgeset}{\texttt{EdgeSet}}
\newcommand{\mapset}{\texttt{MapSet}}

\newcommand{\vis}{\text{vis}}
\newcommand{\E}{\text{E}}
\newcommand{\op}{\text{op}}
\newcommand{\ar}{\text{ar}}

\newcommand{\AB}[1]{\textcolor{purple}{AB: #1}}

\newcommand{\Revision}[1]{\textcolor{blue}{#1}}

\begin{document}

\lefttitle{Cambridge Author}

\jnlPage{1}{8}
\jnlDoiYr{2021}
\doival{10.1017/xxxxx}

\title[A Datalog Framework for Conflict-Free Replicated Data Types]
{A Datalog Framework for \\Conflict-Free Replicated Data Types}

\begin{authgrp}
\author{Elena Yanakieva}
\affiliation{RPTU University Kaiserslautern-Landau}

\author{Annette Bieniusa}
\affiliation{RPTU University Kaiserslautern-Landau}

\author{Stefania Dumbrava}
\affiliation{ENSIIE, Inria Paris, IRIF, SAMOVAR (Télécom SudParis)}
\end{authgrp}


\maketitle

\begin{abstract}

Distributed applications increasingly support local-first collaboration over shared data, allowing multiple users to perform updates concurrently without global coordination. Such collaboration requires careful design to capture the intended semantics of the concurrent interactions.

We introduce a declarative framework for specifying and reasoning about the semantics of conflict-free replicated data types (CRDTs) and CRDT-based applications in Datalog. The framework models CRDT semantics as executable logic programs over operation contexts, making concurrency explicit and compositional, and thus amenable to automated analysis. As one application, we use property-based testing to compare implementations. To the best of our knowledge, this is the first work to systematically use Datalog as a foundation for prototyping and analyzing complex CRDTs and their compositions.

We evaluate our methodology using a collaborative graph data editing case study and report experimentation results assessing correctness validation and scalability with an increasing number of operations and replicas.
\end{abstract}

\begin{keywords}
logic programming, conflict-free replicated data types, local-first, CRDTs
\end{keywords}

\section{Introduction}

\input{introduction_new}


\section{Background: Collaborative Data Structures} \label{background}

\input{background}


\section{A Declarative Framework for CRDT Specification} \label{spec}
\input{implementation}

\section{A Case Study on Graph CRDTs}\label{graph}
\input{use-case}

\section{Experimental Evaluation}\label{experiments}
\input{experiments}

\section{Conclusions and Perspectives}
\input{conclusions}

\bibliographystyle{plainnat}
\bibliography{bibliography}
\newpage
\input{appendix}

\end{document}

%% file: introduction_new.tex
Distributed applications increasingly support collaboration over shared data, allowing multiple users to update the same data concurrently. This raises a fundamental question: when concurrent updates conflict, to which state should the application converge? To ensure predictable and well-defined behavior, developers must therefore specify application-specific conflict semantics. From a semantic perspective, the core challenge is, thus, not only convergence itself but also the lack of explicit, analyzable specifications that explain why a particular concurrent outcome arises.

This challenge is amplified by the need for high availability. 
To enable offline work and low-latency local updates, collaborative systems must weaken consistency guarantees \citep{capTheorem} and reconcile concurrent updates afterwards. 
Conflict-free Replicated Data Types (CRDTs) address this need by providing replicated data structures with deterministic convergence and strong eventual consistency (SEC)~\citep{crdts}.
However, building applications from CRDTs remains difficult in practice. 
Developers typically combine multiple data structures, but conflict-resolution policies do not compose predictably. 
Subtle concurrency patterns can cause otherwise correct CRDTs to converge to states that violate the intended application semantics. While CRDTs guarantee deterministic convergence, they do not guarantee that composed CRDTs converge to a state consistent with the intended application semantics.

Our work addresses this problem by providing a declarative modeling framework and tool that supports developers in exploring the composability of CRDT semantics.

\paragraph{Related Work.} 
Dedalus (\cite{dedalus}) introduces a temporal Datalog-based formalism for reasoning about replicated mutable state and asynchronous communication in long-running distributed systems, whereas our focus is on executable semantic specifications for replicated data types and their compositions. 
Building on these ideas, the Bloom language~(\cite{bloom}) proposes a declarative framework for reasoning about distributed programs through monotonic programming over sets, assuming a fixed merge operator (union) and a containment-based partial order.

Subsequent systems explored deterministic and coordination-free programming models using lattice-based structures. LVars~(\cite{KuperN13}) introduced deterministic-by-construction parallelism over lattices, while Lasp~(\cite{MeiklejohnR15}) proposed a functional programming model for composing state-based CRDTs into larger computations that satisfy SEC. More recently, the CALM theorem~(\cite{calm}) established that logically
monotonic programs achieve consistency without coordination, reinforcing
Datalog as a natural foundation for such reasoning.



The Peepul approach~(\cite{soundarapandian2022certified}) certifies MRDTs with machine-checked convergence proofs via F* and SMT solving, targeting individual CRDT correctness rather than application-level semantics across compositions. Their approach complements ours: while they derive verified implementations, we make semantic mismatches between intended behavior and CRDT compositions explicit, executable, and falsifiable.

The work in~\cite{DBLP:conf/papoc/PandeyD00PP25} analyzes the challenges of applying replicated data types to property graph databases in local-first settings, identifying how graph-specific constraints interact with concurrent updates; our framework provides a concrete methodology for specifying and testing such interactions.

\paragraph{Our approach.} To avoid unintended behavior, developers need to be able to state the intended semantics of an application independently of a particular CRDT design. We propose a workflow in which developers first write a declarative, executable reference specification of the application semantics and then construct an implementation via CRDT composition. 
Our framework supports systematic testing: given generated concurrent executions, it checks whether a composite realization converges to the same state as the reference specification using property-based testing. While formal verification of distributed applications remains expert-level and costly, executable specifications enable lightweight, systematic testing readily usable during design.

Datalog is central to our work. First, it provides a concise and declarative way to express CRDT semantics as relations over events and visibility. Second, its monotonic, bottom-up evaluation model aligns naturally with CRDT semantics, where state is derived deterministically from a growing set of observed events. Third, Datalog specifications are executable, enabling systematic testing without committing to a concrete distributed implementation.

We implement the framework in \textsc{CRDTLog}, a Datalog-based system built on the Soufflé engine. \textsc{CRDTLog} provides a library of declarative specifications for basic, nested, and composite CRDTs, as well as tools to generate operation contexts and compare semantic outcomes. \emph{To the best of our knowledge, this is the first work to systematically use Datalog as a foundation for prototyping and analyzing complex CRDTs and their compositions.}
We make the following contributions:

\begin{itemize}
    \item We present an extended declarative framework for specifying the semantics of CRDTs and collaborative applications based on operation contexts. 
    \item We implement the framework in Soufflé and provide a library of classical, nested, and composite CRDT specifications, including two CRDT-based composite realizations of collaborative graph applications.
    \item We present a case study on collaborative graphs that demonstrates how abstract semantic specifications support CRDT composition choices and expose non-obvious concurrency behaviors.
    \item We evaluate the approach empirically, assessing semantic equivalence between abstract specifications and composite realizations across a set of generated executions.
\end{itemize}

Although our prototype is implemented using the Soufflé Datalog engine, the approach itself is not tied to a specific system. CRDTlog relies on language features, such as support for recursion, stratified negation, and aggregates which are provided by a wide range of Datalog engines (e.g., Vadalog, Nemo, Logica) as well as Prolog systems with tabling. Soufflé was chosen for its performance and mature tooling, but any engine integrating these features can serve as a backend for our framework. To provide a better playground for designers, we implemented our framework also in DDlog (\cite{ddlog}), an incremental Datalog engine that allows users to interactively test use cases.
A formal user study is beyond the current scope and is deferred to future work.

Additional proofs, and experimental details are provided in the supplemental material.

%% file: background.tex
Our work is motivated by \emph{collaborative applications} that require data structures to be shared and updated by multiple users concurrently under weak connectivity and without global coordination. 
As an example, consider a (directed) graph $(V,A)$ where $V$ is a set of nodes and
$A \subseteq V \times V$ is a set of directed edges, as used in a collaborative design canvas or shared knowledge bases. 
Users may modify the graph by adding (and removing) nodes and edges; to simplify the presentation, we focus here on the structural updates and do not consider modifications of labels.
\Cref{tab:ID-graph} summarizes the semantics of such a graph with \emph{isolate-delete (ID)} semantics.

\begin{table}[h]
\centering
\footnotesize
\begin{tabular}{p{1.4cm}p{5cm}p{3.5cm}p{3cm}}
\hline
\textbf{Operation} & \textbf{Precondition} & \textbf{Postcondition} & \textbf{Return value} \\ \hline

\hasN{v}
& - 
& $V' = V \land A' = A$ 
& $v \in V$ \\

\hasE{u}{v} 
& - 
& $V' = V \land A' = A$ 
& $(u,v) \in A$ \\

\addN{v} 
& - 
& $V' = V \cup \{v\} \land A' = A$ 
& void \\ 

\addE{u}{v} 
& $u \in V \land v \in V$ 
& $A' = A \cup \{(u,v)\} \land V' = V$ 
& void \\ 

\rmvE{u}{v}
& - 
& $A' = A \setminus \{(u,v)\} \land V' = V$ 
& void \\ 

\rmvN{v}
& $v \in V \land \forall x \in V: (x,v) \notin A \land (v,x) \notin A$ 
& $V' = V \setminus \{v\} \land A' = A$ 
& void \\ \hline

\end{tabular}
\caption{Operations for a directed graph with isolate-delete (ID) semantics; only isolated, non-connected nodes may be removed.}
\label{tab:ID-graph}
\end{table}

Additionally, we require an application-level invariant, which guarantees the absence of dangling edges: $\forall (u,v)\in A:\ u\in V \land v\in V$.

\begin{figure}[h]
    \begin{subfigure}[b]{0.30\linewidth}
    \centering
    \caption{User A inserts an edge.}
    \vspace{1mm}
    \scalebox{0.8} {
        \begin{tikzpicture}
            \node (q0) [state] {$n$};
            \node (q1) [state, right = of q0] {$m$};
            \path[->]
                (q0) edge node [below] {} (q1);
        \end{tikzpicture}}
    \end{subfigure}
    \begin{subfigure}[b]{0.30\linewidth}
    \caption{User B removes node $m$.}
    \vspace{1mm}
    \centering\scalebox{0.8} {
        \begin{tikzpicture}
            \node (q0) [state] {$n$};
            \node (q2) [state, right = of q0] {};
            \node (q1) [state, right = of q0, cross out] {$m$};
        \end{tikzpicture}}
    \end{subfigure}
    \begin{subfigure}[b]{0.30\linewidth}
        \centering
        \caption{After sync.}
        \vspace{1mm}
        \scalebox{0.8} {
        \begin{tikzpicture}
            \node[state] (q0) {$n$};
            \path[->,draw = red]
                (q0) edge node [below, text=red] {} +(1.5,0);
        \end{tikzpicture}}
    \end{subfigure}
    \vspace{2mm}
    \caption{Dangling edge scenario.}
    \label{figureDanglingEdge}
\end{figure}

In a single-user, sequential setting, this invariant can be enforced directly by the operations. 
However, under concurrent updates by multiple users, preconditions are not stable, leading to invariant violations.
We illustrate such a violation in \Cref{figureDanglingEdge}.
Consider a graph with two nodes $n$ and $m$. User A executes \addE{n}{m} while user B concurrently executes \rmvN{m}. 
To avoid an invalid graph state, a conflict-resolution strategy is required to preserve the invariant. One possible outcome is to remove $m$ and discard the edge; another is to discard the removal of $m$. No resolution is universally correct: the appropriate behavior depends on the intended application semantics.

Developers must therefore specify how conflicting operations should be interpreted and combined while preserving availability to enable offline execution and immediate local reasoning. The CAP theorem states that, in the presence of network partitions, strong consistency and availability cannot be guaranteed simultaneously (\cite{capTheorem}). Consequently, local-first collaborative systems require an explicit and declarative choice of \emph{conflict semantics}.

Conflict-free Replicated Data Types (CRDTs) provide a standard approach to implementing highly available replicated state with deterministic convergence \citep{crdts, DBLP:conf/papoc/BorthLB25}. Informally, CRDTs guarantee strong eventual consistency (SEC): once all replicas have received the same set of updates, they converge to the same state. Several systems expose CRDTs to application developers, including Yjs \citep{Nicolaescu2015Yjs} and Automerge \citep{Automerge}. These frameworks provide a collection of basic replicated data types, such as sets, maps, sequences, counters, and registers, each equipped with a specific conflict-resolution policy.

Throughout this paper, we refer to common policies for datatypes such as:

\begin{itemize}

\item \emph{add-wins} sets: concurrent \add{v} and \del{v}
leave \texttt{v} present;
\item \emph{update-wins} maps: concurrent \upd{k}{v} and \rmv{k}
preserve the binding for \texttt{k} when the entry is concurrently updated.
\end{itemize}

These policies are useful primitives, but they are not neutral; each one encodes a specific choice about user intent under concurrency.

Real applications rarely consist of a single basic CRDT. Instead, developers compose and nest CRDTs to obtain richer behavior. However, conflict resolution does not automatically compose intuitively \citep{kuessner2023algebraic}: combining CRDTs can introduce unintended behaviors that may only surface under subtle concurrent execution patterns.

This motivates a tool-supported workflow in which developers specify application semantics abstractly, explore alternative CRDT decompositions, and verify that the chosen design matches the intended behavior. Ideally, this relies on an executable specification enabling systematic exploration of concurrent user interactions. The remainder of the paper develops such a framework and implements it in Soufflé~\citep{DBLP:conf/cav/JordanSS16}.

Our framework builds on the declarative model of replicated data types by \citet{burckhardt2014principles}, which defines each data type as a deterministic function over an operation context, in which events and their relation are captured. Note that we generalize the definition of an operation context such that each event can be mapped to a set of operations to model that the atomic execution of all operations of one event as needed in \Cref{souffleCRDTApplications}.

\begin{definition}[Operation Context and Replicated Data Type]
Let $\E$ be a set of events, $Op_{X}$ be the set of operations for a replicated data type $X$ and $\mathsf{Values}$ be a set of possible return values. An \emph{operation context} is a finite event graph $C = (\E,\op,\vis,\ar)$, where $\op: \E \xrightarrow{} \mathcal{P} (Op_X)$ is the set of operations of each event. The operations in the set are required to be commutative and associative so that the order of execution does not matter. $vis$ is an acyclic relation representing visibility among the elements of $\E$, and $\ar$ is a total order representing the arbitration of the elements in $E$. 

A \emph{replicated data type} $\F{X} : Op_X \times C \xrightarrow{} \mathsf{Values}$ is a (deterministic) function that, given an operation $o$ and an operation context $C$, specifies the expected return value $\F{X}(o,C)$ when performing operation $o$ in context $C$. 
\end{definition}

$\F{X}$ is a declarative specification that maps an operation context to the result of each operation.
Unlike \cite{burckhardt2014principles}, our model restricts contexts to update operations.
Queries (e.g., \hasN{v}) are treated as post-hoc observers evaluated over the context to obtain the final state.
Although concurrency is modeled via the visibility relation, the evaluation of $\F{X}$ is deterministic.

In an add-wins set, an element $v$ is present if there exists an $\add{v}$ event $e_1$ and no $\del{v}$ event $e_2$ that observes it (i.e., $e_1$ is not visible to $e_2$). Thus, a delete concurrent with the add does not remove $v$, capturing add-wins semantics. Formally:
\begin{equation}
{\footnotesize
\begin{aligned}
&\F{SetAW}\big(\hasV{v}, (\E, \op, \vis, \ar)\big) = \text{true} \\
&\iff \exists e \in E: \add{v} \in \op(e) \land \neg\big(\exists e' \in E: \del{v} \in \op(e') \land e \visrel e' \big)
\end{aligned}
}
\label{eq:awset}
\end{equation}

\Cref{fig:opcontext} illustrates an example operation context for an add-wins set. 
The context contains four events: event 1 performs \add{$a$}, event 2 \del{$a$}, event 3 \add{$a$}, and event 4 performs \add{$b$}. The visibility relation includes edges 
$1 \to 2, 3 \to 4$ and $2 \to 4$, meaning that event 2 observes event 1; event 4 observes events 3, 2, and transitively also 1.
Event 3 is concurrent with events 1 and 2.
As arbitration order is irrelevant for an add-wins set, it is omitted. The resulting set contains $a$ and $b$, since the concurrent \add{$a$} in event 3 overrules \del{$a$} in event 2.

\begin{figure}[h]
\centering

\begin{subfigure}[t]{0.40\textwidth}
\vspace{0pt}
\centering
\begin{tikzpicture}[
  nodeStyle/.style={circle, draw, minimum size=8mm, inner sep=2pt},
  opStyle/.style={font=\small},
  labStyle/.style={font=\small, fill=white, inner sep=1pt},
  edgeStyle/.style={thick, -{Latex[length=2mm]}, line cap=round, line join=round},
]
\node[nodeStyle] (n1) {1};
\node[nodeStyle, right=1cm of n1] (n2) {2};
\node[nodeStyle, above=0.3cm of $(n1)!0.5!(n2)$] (n3) {3};
\node[nodeStyle, right=1.5cm of n3] (n4) {4};

\node[opStyle, below=1mm of n1] {\add{$a$}};
\node[opStyle, below=1mm of n2] {\del{$a$}};
\node[opStyle, above=1mm of n3] {\add{$a$}};
\node[opStyle, above=1mm of n4] {\add{$b$}};

\draw[edgeStyle] (n1) -- node[labStyle, below] {$vis$} (n2);
\draw[edgeStyle] (n2) -- node[labStyle, below] {$vis$} (n4);
\draw[edgeStyle] (n3) -- node[labStyle, above] {$vis$} (n4);
\end{tikzpicture}
\end{subfigure}\hfill
\begin{subfigure}[t]{0.57\textwidth}
\vspace{0pt} 
\raggedright 

\begin{lstlisting}[style=souffleStyle, linewidth=\linewidth]
setEvent(1,add,a). setEvent(2,remove,a).
setEvent(3,add,a). setEvent(4,add,b).
vis(1,2). vis(3,4). vis(2,4).
visible(E1,E2) :- vis(E1,E2).
visible(E1,E2) :- visible(E1,E3), vis(E3,E2).
conc(E1,E2) :- setEvent(E1,_,_), setEvent(E2,_,_), 
               not visible(E1,E2), 
               not visible(E2,E1), E1/=E2.
\end{lstlisting}

\end{subfigure}

\caption{Example operation context for an add-wins set and its Datalog encoding.}
\label{fig:opcontext}
\end{figure}

%% file: implementation.tex
We introduce \textsc{CRDTlog}\footnote{\textsc{CRDTLog}'s code is available at \url{https://github.com/elly-yanakieva/CRDTLog}.}, an implementation of the declarative approach to CRDTs from Sec.~\ref{background}. It extends the framework with nested and composite CRDTs, enabling designers to model CRDT-based applications and reason about their semantics. This approach is generic: users can define arbitrary CRDTs with custom conflict resolution strategies. For instance, a delete-wins set, where any observed deletion takes precedence, can be captured by the specification in \Cref{eq:dwset}.
\begin{equation}
\footnotesize
\F{SetDW}\big(\hasV{v}, (\E, \op, \vis, \ar)\big) = \text{true} \iff \neg\big(\exists e \in E: \del{v} \in \op(e))
\label{eq:dwset}
\end{equation}
\textsc{CRDTlog} is implemented in Soufflé, a Datalog engine with efficient, scalable bottom-up evaluation~\citep{DBLP:conf/cav/JordanSS16}. For readability, we present the theory and its Datalog implementation side by side: we represent operation contexts as input relations and derive return values (for queries) as output relations.

Our framework provides a “playground” for the design phase (\Cref{fig:workflow}). Developers first write the \emph{specification-level semantics (SLS)}, a high-level specification of the intended application semantics, then propose the \emph{implementation compositional semantics (ICS)}: a CRDT composition with transformation rules mapping application operations to component operations. Both are executable over generated operation contexts, enabling systematic exploration of corner cases and property-based equivalence checks. If equivalence fails, the decomposition is revised and the process iterates. Crucially, the framework isolates the semantic core, i.e., how conflicts are resolved and how composed CRDTs interact, while abstracting away protocol and engineering choices, which can be deferred until after the semantic design is validated.

Throughout this section, we refer to the collaborative graph introduced in Sec.~\ref{background} and summarized in Table \ref{tab:ID-graph} as a running example to illustrate the workflow. The operations of the used datatypes can be found in \Cref{tab:operations}.

\begin{table}[t]
\centering
\footnotesize
\begin{tabular}{p{1.6cm}p{6.5cm}p{5cm}}
\textbf{Datatype} & \textbf{Update operations} & \textbf{Query operations}\\ 

Set 
& \add{$v$}, \del{$v$} 
& \hasV{$v$} \\ 

Map$\langle$Set$\rangle$ 
& \upd{$k$}{\add{$v$}}, \upd{$k$}{\del{$v$}}, \rmv{$k$}
& \hasKV{$k$}{$v$}\\ 

Graph 
& \addN{$n$}, \rmvN{$n$}, \addE{$n$}{$m$}, \rmvE{$n$}{$m$}  
& \hasN{$n$}, \hasE{$n$}{$m$}
\end{tabular}
\vspace{1mm}
\caption{Summary of datatype operations.}
\label{tab:operations}
\end{table}

\input{workflow_diagram_v2}

\subsection{Operation Contexts in Datalog}

We encode an operation context in \textsc{CRDTLog} using three input relations: one for events and their operations, \dtinline{vis} for the visibility relation, and \dtinline{ar} for the arbitration order.  

CRDT systems typically derive visibility, a.k.a. happens-before, from metadata such as vector clocks, and arbitration order from timestamps. 
The arbitration order is used only when semantics require a total order on concurrent events (e.g., last-writer-wins); otherwise, conflict resolution relies solely on visibility.

We illustrate the operation context of the add-wins set from Fig.~\ref{fig:opcontext}. 
Event facts in Datalog refer to an event identifier, the operation name, and its parameters. 
For example, event \dtinline{1} applies \dtinline{add} with parameter \dtinline{a}.
For readability, the input relation \dtinline{vis} contains only direct visibility edges, and we compute its transitive closure to obtain full visibility. 
Events are considered concurrent when they are not connected by the visibility relation.

In general, CRDT correctness is argued via \emph{strong eventual consistency}: replicas that have (eventually) observed the same set of updates must converge to the same state. 
In our setting, concurrency and update propagation is captured explicitly in the operation context which we assume to be well-formed and identical across replicas once they have observed the same updates. 
Each data type $X$ is then defined by function 
$\F{X}$ over this context. 
Consequently, any two replicas that evaluate $\F{X}$ on the same operation context derive the same output relations, yielding convergence by construction. 
Since our Datalog programs implement $\F{X}$ as rule evaluation over the relations representing the context, they inherit this determinism directly.

\subsection{CRDT Encoding in Datalog}
In Section 2, we introduced declarative semantics for basic CRDTs, such as the add-wins set (\Cref{eq:awset}). We express these semantics as Datalog rules that derive output relations representing the observable state at the end of the operation context. 

\begin{lstlisting}[style=souffleStyle, caption = Add-wins set semantics., label=lst:awset]
postVisOp(E1) :- setEvent(E1, _, X), setEvent(E2, "del", X), visible(E1, E2), E1 \= E2.
setState(X)   :- setEvent(E, "add", X), not postVisOp(E).
\end{lstlisting}



\Cref{lst:awset} gives the Datalog encoding of the add-wins set.
Events are represented by \dtinline{setEvent(E,O,X)}, where \dtinline{E} is an event identifier, \dtinline{O} an operation, and \dtinline{X} a value.
The predicate \dtinline{setState(X)} computes \hasV{\dtinline{X}} over the operation context.
A value \dtinline{X} is in the set if and only if there exists an \add{\dtinline{X}} that is not canceled by a later \del{\dtinline{X}} (\dtinline{not postVisOp}).

Collaborative applications typically compose basic or nested CRDTs into composite designs~\citep{kuessner2023algebraic}. 
Accordingly, \textsc{CRDTlog} supports nested CRDTs such as maps; our graph case study uses a Map$\langle$Set$\rangle$.

\subsection{Nested CRDTs}
\label{appendix:nested}

Designers obtain nested data types by choosing component CRDTs and writing transformation rules from top-level events to component events. In nested structures, the decomposition is often structural. For example, a Map$\langle$Set$\rangle$  combines a set that tracks keys and a set of values for each key.

To make this example concrete, consider a Map$\langle$Set$\rangle$  with $Op_X = \{\upd{k}{v}, \rmv{k}\}$, where $v$ can be either $\add{x}$ or $\del{x}$. The first step is to define the SLS that formalize the behavior of the application as a single object:
\begin{equation}
{\footnotesize
\begin{aligned}
&\F{MapSet\_SLS}\big(\hasKV{k}{x}, (\E, \op, \vis, \ar)\big) = \text{true} \\
&\iff \exists e \in \E: \upd{k}{\add{x}} \in \op(e) \land \big( \neg \exists e' \in \E: \rmv{k} \in \op(e') \land e \xrightarrow{vis} e' \big) \\
& \quad \land \big( \neg \exists e'' \in \E: \upd{k}{\del{x}} \in \op(e'') \land e \xrightarrow{vis} e'' \big) 
\end{aligned}
}
\label{eq:mapsetadt}
\end{equation}

A Map$\langle$Set$\rangle$  contains a key \dtinline{k} and value \dtinline{x} in the set associated with \dtinline{k} if it was inserted by an \dtinline{upd} operation and neither the key nor the value were later removed.

Next, we obtain the ICS for the Map$\langle$Set$\rangle$. Intuitively, the map contains $(k,x)$ when $k$ is present in the key set and $x$ is present in the value set associated with $k$. \Cref{fig:layered_translation_mapset} presents the corresponding transformation rules.
\begin{equation}
{\footnotesize
\begin{aligned}
&\F{MapSet\_ICS}\big(\hasKV{k}{x}, (\E, \op, \vis, \ar)\big) = \text{true} \iff \\
&\quad \F{Set}\big(\hasV{k}, (\E, \op_{\keyset}, \vis, \ar)\big)
\land  \mathcal{F}_{\text{set}_k}\big(\hasV{x}, (\E_{\valset}, \op_{\valset}, \vis, \ar)\big)
\end{aligned}
}
\label{eq:mapsetics}
\end{equation}
\begin{figure}[h]
\centering
{\footnotesize
\[
\renewcommand{\arraystretch}{1.15}
\begin{array}{rcl}
\mapset.\upd{k}{\add{v}}
&\overset{\mathcal{T}}\Rightarrow&
\{\keyset.\add{k}, \valset.\add{v}\} \\

\mapset.\upd{k}{\del{v}}
&\overset{\mathcal{T}}\Rightarrow&
\{\keyset.\add{k}, \valset.\del{v}\} \\

\mapset.\rmv{k}
&\overset{\mathcal{T}}\Rightarrow& \{ \keyset.\del{k} \} 
\end{array}
\]
}
\caption{Layered transformation for update-wins Map$\langle$Set$\rangle$.}
\label{fig:layered_translation_mapset}
\end{figure}

In Datalog, the event facts of the Map$\langle$Set$\rangle$ \dtinline{mapSetOp(E,Op,InnerOp,K,X)} consist of the event identifier \dtinline{E}, the map-level operation \dtinline{Op}, the inner set operation \dtinline{InnerOp} and the operation parameters \dtinline{K} and \dtinline{X} (Listing \ref{lst:mapset}). \dtinline{keysEvent} and \dtinline{setEvent} correspond to the transformation rules. \dtinline{mapState} computes the final state. Note that a key might exist with an empty value set. For that we use the empty value \dtinline{""}.

\begin{lstlisting}[style=souffleStyle, caption = Map$\langle$Set$\rangle$  semantics expressed in Datalog., label = lst:mapset]
keysEvent(E, "add", K) :- mapSetEvent(E, "upd",  _, K, _).
keysEvent(E, "del", K) :- mapSetEvent(E, "rmv", _, K, _).
setEvent(E, InnerOp, X, K) :- mapSetEvent(E, "upd", InnerOp, K, X), not postVisRemove(E).
mapState(K, X)  :- keysState(K), setState(K, X).
mapState(K, "") :- keysState(K), not setState(K, _).
\end{lstlisting}

Soufflé does not support dynamic instantiation of modules at runtime. In particular, a Map$\langle$Set$\rangle$ cannot allocate a fresh set instance per key during evaluation. We therefore encode all inner sets within a single indexed set instance, using an additional identifier (the key) to distinguish per-key state. The corresponding code can be found in \Cref{lst:valueset}. With this encoding pattern, deeper nesting (e.g. maps of maps) reduces to the repeated use of transformation rules and identifiers.

\begin{lstlisting}[style=souffleStyle, caption = Value set semantics expressed for ICS Map$\langle$Set$\rangle$., label = lst:valueset]
laterVisOpSet(E1, E2) :- setEvent(E1, _, X, S), setEvent(E2, "del", X, S), 
                         visible(E1, E2), E1\=E2.
setState(S, X)        :- setEvent(E, "add", X, S), not laterVisOpSet(E, _).
\end{lstlisting}

\subsection{Composite CRDTs and CRDT Applications} \label{souffleCRDTApplications}

Collaborative applications typically combine multiple CRDTs into composite ones~\citep{kuessner2023algebraic}. 
We illustrate the workflow on composite CRDTs using the ID graph application from Table \ref{tab:ID-graph}. 
Recall that a node is in the graph if it has been added and not later removed, unless a concurrent \dtinline{addE} defeats the removal. 
An edge is present if and only if it has been added and not subsequently removed.
We define the corresponding specification-level semantics (SLS) in \Cref{eq:graphADT}.
\begin{equation}
\label{eq:graphADT}
{\footnotesize
\begin{aligned}
&\F{IDGraph\_SLS}\big(\hasN{n},\E,\op,\vis,\ar\big)=\text{true} \iff \exists e\in \E:\ \addN{n} \in \op(e) \; \\ 
&\land \; \Big(\neg\exists e'\in \E:\ \rmvN{n} \in \op(e')\land e\visrel e'\lor\ \exists e',e''\in \E:\ \rmvN{n} \in \op(e') \land e\visrel e'\\
& \hspace{1.3cm} \land \big(\addE{n}{m} \in \op(e'')\ \lor\ \addE{m}{n} \in \op(e'')\big)\ \land e'\not\visrel e'' \land e''\not\visrel e'\Big)
\\
& \F{IDGraph\_SLS}\big(\hasE{n}{m}, \E, \op, \vis, \ar\big) = \text{true} \\
& \iff \exists e \in \E : \addE{n}{m} \in \op(e)   \land \Big( \neg\exists e' \in \E : \rmvE{n}{m}  \in \op(e') \land e \visrel e' \Big)
\end{aligned}
}
\end{equation}
In the executable SLS in Datalog (Listing \ref{lst:graphidadt}), we model the input as facts \dtinline{graphEvent(E, O, V1, V2)}, where \dtinline{E} is the event identifier, \dtinline{O} is the operation of the event, and \dtinline{V1} and \dtinline{V2} are the parameters of the operation.
For a node \dtinline{n} to be in the resulting graph state \dtinline{nodes(K)}, there must be an \dtinline{addN(n)} which is not visible to a \dtinline{remove} (\dtinline{not postVisRmvN}); or, if there is a \dtinline{postVisRmvN}, there exists a concurrent \dtinline{addE} (\dtinline{concAddE}), outgoing or incoming from \dtinline{n}. An edge is in the resulting graph state (\dtinline{edges(N,M)}) if the edge has been added and it has not been removed later (\dtinline{postVisRmvE}).
\begin{lstlisting}[style=souffleStyle, caption = SLS semantics of isolate-delete graph., label = lst:graphidadt]
conc(E1,E2)     :- graphEvent(E1,_,_,_), graphEvent(E2,_,_,_), not visible(E1,E2), 
                   not visible(E2,E1), E1\=E2.
concAddE(E2, K) :- graphEvent(E2,_,_,_), graphEvent(E1, "addE", K, _), conc(E1,E2).
concAddE(E2, K) :- graphEvent(E2,_,_,_), graphEvent(E1, "addE", _, K), conc(E1,E2).
postVisRmvN(E1, K, E2) :- graphEvent(E1, _, _, _), graphEvent(E2, "rmvN", K, _), 
                          visible(E1, E2), E1\=E2, not concAddE(E2, K).
nodes(K)        :- graphEvent(E, "addN", K, _), not postVisRmvN(E, K, _).
postVisRmvE(E1) :- graphEvent(E1,_,K1, K2), graphEvent(E2, "rmvE", K1, K2), 
                   visible(E1, E2), E1 \= E2.
edges(N, M)     :- graphEvent(E, "addE", N, M), not postVisRmvE(E).
\end{lstlisting}
The next step is to derive the \emph{implementation-level compositional semantics} (ICS) by choosing a CRDT decomposition and defining transformation rules (\Cref{def:translate}). A natural decomposition mirrors a sequential design: a replicated node set (\nodeset) and a map from each node to its outgoing-edge set (\edgemap). The transformation rules (Fig.~\ref{fig:layered_translation_id}) map each application event to one or more component updates. For nodes, \addN{\texttt{n}} and \rmvN{\texttt{n}} translate to \add{\texttt{n}} and \del{\texttt{n}}; to realize isolate-delete under concurrency, \addE{\texttt{n}}{\texttt{m}} additionally produces \add{\texttt{n}} and \add{\texttt{m}} on the \nodeset, inducing the conflict needed to win over a concurrent node removal. For edges, outgoing edges are stored in an adjacency map: \addE{\texttt{n}}{\texttt{m}} translates to \upd{\texttt{n}}{\add{\texttt{m}}} and \rmvE{\texttt{n}}{\texttt{m}} to \upd{\texttt{n}}{\del{\texttt{m}}}; we use \texttt{upd} rather than key removal, which would delete all outgoing edges of~\texttt{n}.
Since \edgemap\, is a Map$\langle$Set$\rangle$, its internal transformation rules also apply; Map$\langle$Set$\rangle$ is fully defined in the extended version.
\begin{definition} [Transformation rules]
A \emph{transformation rule} $\mathcal{T}: Op_{X} \to \mathcal{P}(Op_{X_1} \cup \dots \cup Op_{X_n})$ maps each operation of a composite data type $X$ to a set of operations over its components $X_1, \dots, X_n$.
\label{def:translate}
\end{definition}
\vspace{-0.8em}
\begin{figure}[h!]
\centering
{\footnotesize
\[
\renewcommand{\arraystretch}{1.15}
\begin{array}{rcl}
\idg.\addN{n}
&\overset{\mathcal{T}}\Rightarrow&
\{ \nodeset.\add{n} \}\\

\idg.\rmvN{n}
&\overset{\mathcal{T}}\Rightarrow&
\{ \nodeset.\del{n} \}\\

\idg.\addE{n}{m}
&\overset{\mathcal{T}}\Rightarrow& \{\nodeset.\add{n},\nodeset.\add{m}, \edgemap.\upd{n}{\add{m}}\}\\

\idg.\rmvE{n}{m}
&\overset{\mathcal{T}}\Rightarrow& 
\{ \edgemap.\upd{n}{\del{m}} \}\\

\edgemap.\upd{n}{\add{m}} 
&\overset{\mathcal{T}}\Rightarrow&
\{\keyset.\add{n},\ \valset_n.\add{m}\} \\

\edgemap.\upd{n}{\del{m}}
&\overset{\mathcal{T}}\Rightarrow&
\{\keyset.\add{n}, \valset_n.\del{m}\}
\end{array}
\]
}
\caption{Layered transformation for isolate-delete graph.}
\label{fig:layered_translation_id}
\end{figure}
The resulting ICS of the ID graph is given as:
\begin{equation}    
{\footnotesize
\begin{aligned}
&\F{IDGraph\_ICS}\big(\hasN{n}, (\E, \op, \vis, \ar)\big) = \text{true} \iff \F{Set}\big(\hasV{n}, (\E_{\nodeset}, \op_{\nodeset}, \vis, \ar)\big) \\
&\F{IDGraph\_ICS}\big(\hasE{n}{m}, (\E, \op, \vis, \ar)\big) = \text{true}  \iff \\ &\quad \F{MapSet}\big(\hasKV{n}{m}, (\E_{\edgemap}, \op_{\edgemap}, \vis, \ar)\big) 
\end{aligned}
}
\label{eq:graphidicsnode}
\end{equation}
In the Datalog realization (Listing \ref{lst:graphidics}), the input facts have the same signature as the SLS, namely \dtinline{graphEvent(E, O, V1, V2)}. 
The resulting state is computed for each component. 
Note that this implementation may produce keys with empty value sets. 
We interpret such entries as non-existent edges and ignore them when inspecting the application state.

\begin{lstlisting}[style=souffleStyle, caption = ICS for isolate-delete graph., label = lst:graphidics]
nodesEvent(E, "add", V) :- graphEvent(E, "addN", V, _).
nodesEvent(E, "del", V) :- graphEvent(E, "rmvN", V, _).
nodesEvent(E, "add", V) :- graphEvent(E, "addE", V, _).
nodesEvent(E, "add", V) :- graphEvent(E, "addE", _, V).
edgesMapSetEvent(E, "upd", "add", V1, V2) :- graphEvent(E, "addE", V1, V2).
edgesMapSetEvent(E, "upd", "del", V1, V2) :- graphEvent(E, "rmvE", V1, V2).
\end{lstlisting}

The framework is compositional at the specification level: composite CRDTs are written as decompositions of simpler ones together with the corresponding transformation rules. The ICS is then mechanically derived from this choice of decomposition. However, Soufflé does not allow multiple independent instances of the same CRDT to share a single set of rules without an explicit discriminator. For nested types, this can be handled through a shape-based encoding, where nesting positions are represented as structured scopes. In this approach, a single set module serves all (map) nesting levels, with each instance distinguished by its position in a recursive scope type rather than by duplicated and renamed Datalog rules. For example, a Map$\langle$Map$\langle$Set$\rangle \rangle$ reuses the same set semantics as a Map$\langle$Set$\rangle$, with an additional scope level encoding the outer key: no ad-hoc adjustments are required, and the pattern extends to arbitrary nesting depths. In this paper, we opt for the explicit, per-instance encoding (e.g., separate \texttt{nodesEvent} and \texttt{edgesMapSetEvent} relations) as it maps directly to the formal definitions in Section~\ref{background} and makes the correspondence between the specification-level and implementation-level semantics transparent for verification.

CRDT composition is not semantics-neutral: decomposition choices determine which operations conflict and how conflicts are resolved. For graphs, nodes and edges are separate structures but semantically coupled; i.e., an \texttt{addE} may defeat a concurrent \texttt{rmvN}, so the transformation rules must intentionally induce or avoid such conflicts.

%% file: workflow_diagram_v2.tex
  \usetikzlibrary{shapes.geometric, positioning, fit}                                                                                                             
  \begin{figure}[t]                                                                                                                                               
  \centering                                                                                                                                                      
  \scalebox{.9}{                                                                                                                                                  
  \begin{tikzpicture}[
    font=\footnotesize,
    node distance=4mm and 4mm,
    box/.style={                                                                                                                                                  
      draw, rounded corners, thick, align=center,
      minimum width=2cm, minimum height=0.7cm                                                                                                                     
    },            
    abstract/.style={box},
    process/.style={box, fill=gray!10, dashed},
    decision/.style={draw,
    diamond,
    thick,
    align=center,
    aspect=2,
    inner sep=1pt},                                                                                                                        
    arrow/.style={->, thick},
    bigbox/.style={draw, rounded corners, inner sep=5pt}                                                                                                          
  ]               

  \node[box] (app) {Application\\[-1pt]Requirements};
  \node[abstract, right=of app] (adt) {SLS};                                                                                                                      
  \node[process, right=of adt] (decomp) {CRDT\\[-1pt]Decomposition};
  \node[process, right=of decomp] (trans) {Transformation\\[-1pt]Rules};                                                                                          
   
  \node[bigbox, fit=(decomp)(trans), inner sep=8pt] (icsbox) {};                                                                                                  
  \node[font=\scriptsize, anchor=north east] at ([yshift=1.3pt]icsbox.north east) {ICS};
                                                                                                                                                                  
  \node[decision, right=of icsbox] (decide) {SLS $\stackrel{?}{\equiv}$ ICS};                                                                                                       
   
  \node[bigbox, fit=(decide), inner sep=6pt] (valbox) {};                                                                                                         
  \node[font=\scriptsize, anchor=south west] at ([yshift=0.1pt]valbox.south west) {Validation};

  \node[right=5mm of valbox, text=green!60!black] (ok) {\Large\checkmark};
                                                                                                                                                                  
  \draw[arrow] (app) -- (adt);
  \draw[arrow] (adt) -- (decomp);
  \draw[arrow] (decomp) -- (trans);                                                                                                                               
  \draw[arrow] (trans) -- (decide);
  \draw[arrow] (valbox) -- node[above]{\scriptsize yes} (ok);                                                                                                     
                                                                                                                                                                  
  \draw[arrow] (decide.north) -- ++(0,3mm) -| node[near start, above]{\scriptsize no} (decomp.north);                                                             
                                                                                                                                                                  
  \end{tikzpicture}}
  \vspace{2mm}
  \caption{Workflow for deriving and validating CRDT compositions. From application requirements, the user defines the SLS, decomposes it into CRDTs    
  with transformation rules (ICS), and validates equivalence. Mismatches require revising the decomposition.}                                                
  \label{fig:workflow}
  \end{figure}

%% file: use-case.tex
We evaluate \textsc{CRDTlog} on two directed graph variants: the \emph{isolate-delete (ID)} graph from \Cref{background,spec} and a \emph{detach-delete (DD)} graph introduced below, capturing common deletion policies with non-trivial concurrent interactions. We show that \textsc{CRDTlog} yields precise, executable semantics and enables systematic CRDT component selection.

\paragraph{Isolate-delete semantics (ID).} 
We first revisit the ID semantics defined in Section \ref{souffleCRDTApplications}. A natural implementation-level decomposition (ICS) represents nodes as an add-wins Set and outgoing edges as a $\mathrm{Map} \langle \mathrm{Set} \rangle$. The transformation layer is straightforward for most operations and maps \texttt{addE} to additional \texttt{add} updates on the node set to ensure that concurrent edge additions override node removals, as required by the SLS.

\paragraph{Detach-delete semantics (DD).}

We next consider the DD semantics for directed graphs. This specifies that removing a node also removes all its incident edges, ensuring the absence of dangling edges.
The formal definition differs from ID only in the definition of $\hasE{u}{v}$ (\Cref{tab:ID-graph}).
DD preserves the node semantics of ID. In particular, a node $n$ is present if it has been added and not subsequently removed, unless the removal is invalidated by a concurrent edge addition involving $n$. Hence, the definition of $\F{DDGraph\_SLS}(\hasN{n},\dots)$ is identical to the isolate-delete case (Eq. \ref{eq:graphADT}).
Further, a DD graph contains an edge \texttt{(n,m)} if it has been added, has not been subsequently removed, and neither the source nor target node has been removed. 
The node condition is the new part: if \dtinline{n} (or \dtinline{m}) is removed, then all incident edges are removed as well (Eq. \ref{eq:graphddADTEdges}, Listing \ref{lst:graphddadtedges}). In the listing, \dtinline{edges(N,M)} holds for added edges whose endpoints were not removed, while \dtinline{postVisRmvN} captures concurrent \dtinline{addE} that invalidate node removals.
\begin{equation}
\label{eq:graphddADTEdges}
{\footnotesize
\begin{aligned}[t]
&\F{DDGraph\_SLS}\big(\hasE{n}{m},\E,\op,\vis,\ar\big)=\text{true}\\
&\iff \exists e\in \E:\ \addE{n}{m} \in \op(e)\ \land\\
&\quad \quad  \neg\exists e'\in E:\ \rmvE{n}{m} \in \op(e') \land e\visrel e'\ \land
\big(\Gamma(e,n)\land \Gamma(e,m)\big) \\ 
&\Gamma(e,u)\stackrel{\mathrm{def}}{=}
\neg\exists s\in \E:\ \rmvN{u} \in \op(s)\land e\visrel s\ 
\end{aligned}
}
\end{equation}
%

\begin{lstlisting}[style=souffleStyle, caption = SLS for DD graph (edge rules)., label = lst:graphddadtedges]
postVisRmvN(E, K, S) :- graphEvent(E, _, _, _), graphEvent(S, "rmvN", K, _), 
                              visible(E, S), E \= S.
postVisRmvE(E)       :- graphEvent(E, _, K1, K2), graphEvent(S, "rmvE", K1, K2), 
                              visible(E, S), E \= S.
edges(N, M) :- graphEvent(E, "addE", N, M), not postVisRmvE(E), 
               not postVisRmvN(E, N, _), not postVisRmvN(E, M, _).
\end{lstlisting}


At first glance, one might attempt to reuse the ID implementation strategy, using a set for nodes and a Map$\langle$Set$\rangle$ for edges. Thus, removing a node \dtinline{n} would remove \dtinline{n} from the node set and delete its outgoing edges by removing the key \dtinline{n} from the edge map. Incoming edges could then be removed by deleting the corresponding entries from the value sets associated with other keys.
However, this straightforward approach fails because it introduces unintended conflicts in the map. In particular, a concurrent \rmvE{n}{m} and \rmvN{n} generate conflicting updates that combine a value-level update (a \texttt{upd} or inner-set modification) with a key-level \texttt{rmv}. 
Under the update-wins map semantics, the \texttt{upd} may resurrect the key \dtinline{n}, thereby preserving outgoing edges from \dtinline{n}. 
The resulting state contains dangling edges, which violates the DD semantics.
To avoid this, we represent the graph with one set for nodes and one set for edges. The edge set stores tuples \dtinline{(n,m)}. This design eliminates key-level put-versus-remove conflicts since all edge updates are applied uniformly to a single set.
The transformations for node operations are unchanged from the ID case. The transformation for edge operations is defined in Fig. \ref{fig:transformation_dd}.

\begin{figure}[h!]
\centering
{\footnotesize
\[
\renewcommand{\arraystretch}{1.15}
\begin{array}{rcl}
\ddg.\rmvN{n}
&\overset{\mathcal{T}}\Rightarrow&
\{\edgeset.\del{n,m}, \edgeset.\del{m,n} \}, \\ &&\text{ where } m \text{ is a node to/from which an edge from/to } n \text{ exists.}\\

\ddg.\addE{n}{m}
&\overset{\mathcal{T}}\Rightarrow& \{ \edgeset.\add{n,m} \}\\ 

\ddg.\rmvE{n}{m}
&\overset{\mathcal{T}}\Rightarrow& \{ \edgeset.\del{n,m} \}
\end{array}
\]
}
\caption{Transformation rules for detach-delete graph. }
\label{fig:transformation_dd}
\end{figure}

\begin{equation}
\footnotesize
\begin{aligned}
& \F{DDGraph\_ICS}\big(\hasE{n}{m}, (\E, \op, \vis, \ar)\big) = \text{true}  \\
& \iff  \F{Set}\big(\hasV{n,m}, (\E_{\edgeset}, \op_{\edgeset}, \vis, \ar)\big) 
\end{aligned}
\label{eq:graphddicsedge}
\end{equation}

\begin{lstlisting}[style=souffleStyle, caption = ICS for DD graph. Transformation rules for the edges., label = lst:graphddicsedges]
edgesEvent(E, "add", V1, V2) :- graphEvent(E, "addE", V1, V2).
edgesEvent(E, "del", V1, V2) :- graphEvent(E, "rmvE", V1, V2).
edgesEvent(E, "del", V1, V2) :- graphEvent(E, "rmvN", V1, _), preVisAddOutE(E, V2).
edgesEvent(E, "del", V2, V1) :- graphEvent(E, "rmvN", V1, _), preVisAddInE(E, V2).
preVisAddInE(E, V2)  :- graphEvent(E, _, V1, _), graphEvent(S, "addE", V2, V1),
                        visible(S, E), E\=S.
preVisAddOutE(E, V2) :- graphEvent(E, _, V1, _), graphEvent(S, "addE", V1, V2),
                        visible(S, E), E\=S.
\end{lstlisting}

Note that the two additional predicates \dtinline{preVisAddInE} and \dtinline{preVisAddOutE} identify the edges that are visible during the removal event. The transformation then removes exactly these edges, ensuring that successful node deletions detach all incident edges while preserving the intended concurrency behavior captured by the node semantics.

We establish that SLS and ICS in the DD graph preserve the no-dangling edge invariant. The full proofs are in the supplemental material.
\begin{lemma}[Dangling Edge Safety for the SLS of a DD Graph]\label{lem:dd-sls-nodangle} 
For the SLS of the DD graph, the invariant of no dangling edges holds for any valid operation context.
\end{lemma}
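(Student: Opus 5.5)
The plan is to unfold the no-dangling-edges invariant $\forall (u,v)\in A:\ u\in V \land v\in V$ into a statement about $\F{DDGraph\_SLS}$. Concretely, I will show that for every valid operation context $C=(\E,\op,\vis,\ar)$, if $\F{DDGraph\_SLS}(\hasE{n}{m},C)=\text{true}$, then $\F{DDGraph\_SLS}(\hasN{n},C)=\text{true}$, and likewise for $m$. The two endpoints are symmetric, since $\Gamma(e,n)$ and $\Gamma(e,m)$ enter the edge condition in the same way. So I treat only the source $n$. Throughout, $\visrel$ is read as the transitive visibility relation (the \dtinline{visible} closure), and ``valid'' is read as follows: every event's preconditions from \Cref{tab:ID-graph} held in the state its replica computed from its causal past, i.e.\ from the context restricted to the events visible to it.

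First I fix the witness $e$ of the edge query: $\addE{n}{m}\in\op(e)$, no $\rmvE{n}{m}$ observes $e$, and $\Gamma(e,n)$ holds, i.e.\ no $\rmvN{n}$ event $s$ satisfies $e\visrel s$. Validity of $e$ says that $\hasN{n}$ held at $e$'s origin. That is, $\F{DDGraph\_SLS}(\hasN{n},C_e)$ is true, where $C_e$ is the downward-closed subcontext of events visible to $e$. Unfolding \Cref{eq:graphADT} on $C_e$ gives an event $a\visrel e$ with $\addN{n}\in\op(a)$. Moreover, every $\rmvN{n}$ event $s\in C_e$ with $a\visrel s$ has a concurrent $\addE{}{}$ event $e''\in C_e$ incident to $n$. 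Here I use that $C_e$ is downward closed, so visibility and concurrency between two events of $C_e$ mean the same thing in $C_e$ and in $C$.

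Next I take this same $a$ as the witness for $\hasN{n}$ in the full context $C$. I must show that every $\rmvN{n}$ event $s$ with $a\visrel s$ is overruled by a concurrent incident $\addE{}{}$. I split on the relation between $s$ and $e$:
\begin{itemize}
\item If $s\visrel e$, then $s\in C_e$, and the concurrent incident $\addE{}{}$ found in $C_e$ is still concurrent with $s$ in $C$.
\item If $e\visrel s$, this is excluded by $\Gamma(e,n)$.
\item Otherwise $s$ and $e$ are concurrent, and $e$ itself is an $\addE{n}{m}$ event incident to $n$, so it is the required $e''$.
\end{itemize}
In every case the second disjunct of \Cref{eq:graphADT} applies, so $\hasN{n}$ holds in $C$; the same argument with $\Gamma(e,m)$ gives $\hasN{m}$.

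The main obstacle is the first step: pinning down ``valid operation context'' precisely enough that the precondition of $\addE{}{}$ yields an $\addN{}$ witness in $e$'s causal past. I also need the observation that concurrency is preserved when passing to downward-closed subcontexts. Without such a validity assumption the lemma fails outright, since a context containing only a lone $\addE{n}{m}$ already produces a dangling edge. So I would state this assumption explicitly before running the case analysis.
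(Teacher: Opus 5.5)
Your proof is correct, and it takes a different route from the paper's. The paper fixes a removed node $n$ and splits on whether some \rmvN{n} has a concurrent incident \texttt{addE}. Without one, it argues that $\Gamma(e,n)$ fails for every incident edge. With one, it argues that the node predicate of \Cref{eq:graphADT} keeps $n$. You instead prove $\hasE{n}{m}\Rightarrow\hasN{n}\land\hasN{m}$ directly: you move an \addN{n} witness $a$ from the causal past $C_e$ of the edge's \texttt{addE} event $e$ to the full context, and place each removal observing $a$ before, after, or concurrent with $e$.

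Your route is more precise and covers more cases. It says where the \texttt{addN} witness comes from, whereas the paper's Case~2 only asserts that $n$ is kept. It handles incident \texttt{addE} events issued causally after a \rmvN{n}; the paper's Case~1 assumes the removal observes every incident \texttt{addE}. It also handles several removals of the same node, which the paper dispatches with a closing remark. Finally, it defines ``valid'' as the SLS itself returning true on the causal past, which is exactly what the lemma needs. A weaker reading such as ``some \addN{n} is visible to $e$'' admits the chain $\addN{n}\visrel\rmvN{n}\visrel\addE{n}{m}$ (with $m$ added elsewhere), which yields a dangling edge.

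The paper's route, in exchange, mirrors the case structure of its ICS lemma and names the conflict-resolution rule that prevents dangling: a concurrent \texttt{addE} defeats \texttt{rmvN}.

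One minor point: you read the node predicate universally, as in Listing~\ref{lst:graphidadt}, whereas the displayed \Cref{eq:graphADT} literally has an existential second disjunct. Your argument adapts to that reading too. A removal/\texttt{addE} pair concurrent in $C_e$ stays concurrent in $C$, and if no removal in $C_e$ observes $a$, every removal in $C$ observing $a$ is concurrent with $e$.
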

\begin{proof}[Proof sketch]
A dangling edge requires a removed node $n$ with a present incident edge. If \rmvN{$n$} has no concurrent incident \texttt{addE}, the endpoint guard $\Gamma(e,n)$ (\Cref{eq:graphddADTEdges}) is falsified by the observed removal for every incident \texttt{addE} event $e$, so no such edge survives. If a concurrent \addE{$n$}{$m$} exists, $\Gamma(e,n)$ holds as the removal is not visible to $e$, but the node predicate (\Cref{eq:graphADT}) also retains $n$, so the edge is not dangling.
\end{proof} 
\begin{lemma}[Dangling Edge Safety for the ICS of a DD Graph]\label{lem:dd-ics-nodangle}
For the ICS of the DD graph, the invariant of no dangling edges holds for any valid operation context.
\end{lemma}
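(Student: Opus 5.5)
We argue by contradiction and mirror the structure of the proof sketch for \Cref{lem:dd-sls-nodangle}. Suppose that in the ICS an edge $(n,m)$ is present, i.e.\ $\F{Set}(\hasV{n,m},(\E_{\edgeset},\ldots))$ holds by \Cref{eq:graphddicsedge}, but one endpoint, say $n$, is absent from the \nodeset. By the add-wins semantics (\Cref{eq:awset}), edge presence yields an event $e$ with $\edgeset.\add{n,m}\in\op(e)$ such that no $\edgeset.\del{n,m}$ event $e'$ satisfies $e\visrel e'$. By \Cref{fig:transformation_dd}, $e$ is an application-level \addE{n}{m} event. The case of a missing target $m$ is symmetric: replace outgoing by incoming edges, i.e.\ \texttt{preVisAddOutE} by \texttt{preVisAddInE}.

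\textbf{Step 1: the node is added.} Since $e$ carries \addE{n}{m}, the node transformation rules, unchanged from the ID case (\Cref{fig:layered_translation_id}), also place $\nodeset.\add{n}$ in $\op(e)$. Because $n$ is absent from the add-wins \nodeset, every $\nodeset.\add{n}$, and in particular the one in $e$, is visible to some $\nodeset.\del{n}$ event $s$. The only source of $\nodeset.\del{n}$ is \rmvN{n}, so $s$ carries \rmvN{n} and $e\visrel s$.

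\textbf{Step 2: the removal detaches the edge.} When \rmvN{n} is translated at $s$, the predicate \texttt{preVisAddOutE}$(s,m)$ holds, witnessed by $e$: $e$ is an \addE{n}{m} event, $e$ is visible to $s$, and $e\neq s$. Hence the rule in \Cref{lst:graphddicsedges} emits $\edgeset.\del{n,m}\in\op(s)$. We also need $e\visrel s$ with respect to the visibility relation of the edge-set context. This holds because all component contexts reuse the same $\vis$ (\Cref{eq:graphddicsedge}). Consequently $e$'s \add{n,m} is observed by a \del{n,m}, contradicting the choice of $e$ in the opening paragraph.

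\textbf{Main obstacle.} The delicate point is the quantifier structure of add-wins. Edge presence requires only \emph{some} surviving add event $e$, while node absence requires that \emph{all} node-adds are observed by a deletion. The argument must therefore start from the surviving edge witness $e$ and use the fact that this very $e$ also generated $\nodeset.\add{n}$. Starting from an arbitrary \rmvN{n} would not work.

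A second subtlety concerns events that contain both \addE{n}{m} and \rmvN{n}. This matters because $\op$ maps an event to a set of operations, and the $e\neq s$ guard excludes $e=s$. We handle it by noting that $\vis$ is acyclic, so $e\visrel s$ forces $e\neq s$ anyway. Finally, we use well-formedness of the operation context: transitive visibility is computed as in \Cref{fig:opcontext}, and this is what ensures the \texttt{visible} predicate coincides with $\visrel$.
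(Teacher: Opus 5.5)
Your proof is correct, and it takes a different route from the paper's. The paper fixes a removed node $n$ and case-splits on what runs concurrently with \rmvN{$n$}:
\begin{itemize}
\item with no concurrent incident \texttt{addE}, the translated \texttt{del}s remove every visible incident edge and nothing re-adds them;
\item with a concurrent incident \texttt{addE}, the transformed \add{$n$} wins in the add-wins \nodeset, so $n$ survives;
\item with a concurrent \texttt{rmvE}, there is no add/remove conflict.
\end{itemize}
All remaining interleavings are dismissed by an informal closing remark.

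You instead argue by contradiction from a single surviving edge-add witness $e$, and you use the coupling built into the transformation. The same event $e$ also put $n$ into the \nodeset. So any \rmvN{$n$} event $s$ that cancels $n$ must satisfy $e \visrel s$. Then \texttt{preVisAddOutE} forces $s$ to emit exactly the \del{$n,m$} that cancels $e$'s edge-add.

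Your approach has three advantages. It is uniform, with no case split and no catch-all clause; that clause is the least rigorous part of the paper's proof. It isolates the real crux: some edge-add survives, whereas all node-adds must be cancelled. And it never invokes the input preconditions that the paper assumes up front, so it actually establishes the invariant of the ICS for every operation context, not only valid ones.

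The paper's scenario-by-scenario version is less rigorous but more operational. It shows a designer why each concurrent pattern resolves as intended, for example why a concurrent \texttt{addE} keeps the node alive. It also mirrors the structure of the SLS proof.
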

\begin{proof}[Proof sketch]
As above, we case-split on events concurrent with \rmvN{$n$}. Without a concurrent \texttt{addE}, the operation translates into \texttt{del}s in the underlying set (\Cref{fig:transformation_dd}), removing all incident edges with $n$. With a concurrent \addE{$n$}{$m$}, the transformed \add{$n$} wins over \texttt{del} in the nodes set, retaining $n$ and preventing a dangling edge.
\end{proof}

This case study highlights the value of explicit semantic specifications: they expose situations in which a CRDT decomposition enforces conflict resolution incompatible with the intended application semantics. \textsc{CRDTlog} makes this mismatch explicit at the specification level, supporting designers in composing CRDTs with the desired semantics.

%% file: experiments.tex
We compare ICS and SLS for both ID and DD graph applications. Correctness is assessed via property-based testing: for eight configurations, we generate 1K independent executions and verify that SLS and ICS produce identical output graphs. We then benchmark scalability by varying event count, replica count, and graph size. We structure the analysis around the following research questions.

\begin{itemize}
\item \emph{\textbf{RQ1} (Semantic Equivalence).}
Do ICS and SLS yield the same observable states for all tested operation contexts?

\item \emph{\textbf{RQ2} (Application Graph Scalability).}
How does the size of the application graph affect the cost of semantic evaluation?

\item \emph{\textbf{RQ3} (Concurrency Scalability).}
How does increasing replica count and concurrency affect evaluation cost?

\item \emph{\textbf{RQ4} (Event Scalability).}
How does evaluation cost scale with increasing events?

\item \emph{\textbf{RQ5} (Incrementality).}
How does incremental evaluation impact verification?
\end{itemize}

\paragraph{Experimental Setup.}

All experimental inputs were automatically generated using Python scripts that enforce  each graph application's preconditions. For example, in the isolate-delete variant,  \dtinline{rmvN(n)} is never generated for nodes with incident edges,
ensuring that all executions satisfy the assumptions of the model.
The primary goal is to validate semantic equivalence (SLS $\equiv$ ICS) via property-based testing, therefore the inputs must be large and diverse enough to cover corner cases arising from concurrent operations across multiple replicas. While the graph semantics used here are intentionally minimal, the framework is intended for richer application domains (e.g., property graphs, collaborative spreadsheets) where complex operation interleavings are expected. For such scenarios, generating large workloads is necessary to provide adequate coverage.

Equivalence tests ran on a machine running  Ubuntu 22.04.2 with 11th Gen Intel(R) Core(TM) i7-1165G7 2.80 GHz CPU and 32 GB of memory. 

Scalability experiments ran on a CPU-only x86\_64 server with two AMD EPYC 7443 processors (1.5--2.85 GHz, boost enabled) and 100 GB of allocated memory. For each configuration, we report mean execution times over 30 independent runs. Both SLS and ICS were evaluated on the same generated inputs.

\paragraph{Semantic Equivalence (\textbf{RQ1}).}
SLS and ICS compute identical observable states for all tested operation contexts. We evaluated semantic equivalence using a property-based testing approach \citep{DBLP:conf/icfp/ClaessenH00}, generating randomized operation contexts and comparing the outputs of SLS and ICS. We considered eight configurations, combining two replica counts (5 and 10) with four event volumes (20, 50, 100, and 1K events). For each configuration, we generated 1,000 independent test inputs. All executions terminated successfully, and in every case, the SLS and ICS implementations produced identical results. These results provide empirical evidence that our CRDT compositions faithfully realize the intended semantics for both the ID and DD graph variants. Average execution times per test are reported in the supplemental material.

\input{figures/fig_scalability}

\paragraph{Scalability (\textbf{RQ2}--\textbf{RQ5}).}
We analyze how semantic evaluation cost scales with operation context size, concurrency, and application graph size. Figure~\ref{fig:scalability_all} reports runtime as a function of graph size (Fig.~\ref{fig:scalability_all}a), replica count (Fig.~\ref{fig:scalability_all}b), and number of events (Fig.~\ref{fig:scalability_all}c). For concurrency and event scalability, we used the full range of graph operations, whereas for the graph scalability we used only \dtinline{addN} and \dtinline{addE} events. Concurrency is modeled by allowing operations to interleave and branch, producing non-trivial visibility structures.

\paragraph{Application Graph Scalability (\textbf{RQ2}).}
Evaluation cost increases with application graph size (nodes and edges), but ICS scales better than SLS for larger graphs. To study graph size scalability, we sampled random graphs with a fixed number of nodes (10, 100, and 1K) using an edge probability of $0.05$, yielding an expected number of edges proportional to the graph size ($\approx$ 15, 600, and 50K). This reflects the average number of events. For each graph, we generated an operation context with 4 replicas. 

\Cref{fig:scale_graph_sub} shows that for small graphs, ICS implementations are slightly slower than the corresponding SLS specifications due to the overhead of transformation rules. However, as graph size increases (and with it the number of events in the operation context), ICS becomes consistently faster for both applications. At approximately 50K graph entities, the ICS realization is nearly twice as fast as SLS for ID and close to an order of magnitude faster for DD. This behavior reflects the lower rule complexity of the component CRDT semantics used in ICS, which outweighs the transformation overhead once evaluation is dominated by joins over many events.

\paragraph{Concurrency Scalability (\textbf{RQ3}).}
Increasing the number of replicas only mildly impacts the evaluation cost. We varied the replica count across 2, 4, 8, and 16, holding the total event count fixed at 1K and 10K. As shown in \Cref{fig:scale_repl_sub}, increasing the replica count results in only modest runtime growth. In \textsc{CRDTlog}, the number of replicas mainly affects the branching structure of the visibility relation without substantially increasing the complexity of rule evaluation. This suggests that semantic evaluation scales well with respect to concurrency.

\paragraph{Event Scalability (\textbf{RQ4}).}

The evaluation cost grows primarily with the number of events in the operation context. As summarized in \Cref{fig:scalability_all}c, runtime increases steadily as the number of events grows for both graph variants and both semantic encodings. This trend holds largely independently of the number of replicas, indicating that event volume is the dominant factor driving semantic evaluation cost. Unlike in RQ2, the resulting graph size is not held constant here: since add and remove operations are interleaved during generation, graph size varies across runs and is not an independent variable.

\paragraph{Incrementality (\textbf{RQ5}).} 
Our framework can also be instantiated with DDlog, an incremental Datalog engine that supports an interactive workflow: users incrementally add events and observe the resulting CRDT state, aiding scenario exploration and debugging. We evaluate DDlog in single-shot and incremental mode across varying events, replicas, and graph sizes. In incremental mode, events are inserted one at a time, mirroring the interactive use case. Both modes are substantially slower than Souffl\'{e} (\Cref{fig:sinc_ops_sub}), which is expected. DDlog optimizes for small updates over a stable base, so it pays a constant per-tuple cost for incrementality, even in single-shot mode. Here each new event causes changes to the visibility relation, so incremental evaluation effectively recomputes everything. DDlog
remains practical for small to medium contexts (up to $\approx$ 1K events), which suffices for its intended interactive, exploratory use case.

\input{figures/fig_ddlog_scalability}

\paragraph{Discussion.}

Although we do not directly compare ID and DD semantics, their scaling behaviors are notably distinct. DD is substantially faster for large graphs, with the SLS–ICS gap most pronounced. This reflects the underlying representations: ID stores edges in a Map$\langle$Set$\rangle$, introducing nesting and multiple component-level updates per event, whereas DD uses a single set of edge tuples with at most one component update per event. ICS consistently outperforms SLS for both graphs, as decomposition yields smaller relations with localized joins rather than complex guard predicates over the entire event. The effect is amplified in DDlog, where the incremental engine benefits from change propagation over many small relations. Thus, decomposition can significantly reduce intermediate joins and improve evaluation time, even when the resulting semantics are identical.

While runtimes grow steeply beyond 1K events, we argue that this is sufficient for property-based testing. Optimizing the underlying Datalog engine is orthogonal to our contribution and could further extend the feasible testing range.

%% file: figures/fig_scalability.tex
\begin{figure}
\centering

\begin{subfigure}[t]{0.45\linewidth}
\centering
\begin{tikzpicture}
\begin{axis}[
    width=\linewidth,
    height=5.3cm,
    xlabel={Graph size},
    ylabel={Avg. runtime (s)},
    xmode=log,
    log basis x=10,
    xtick={10,100,1000},
    xticklabels={15,600,50K},
    ymode=log,
    yticklabel style={
        /pgf/number format/fixed,
        /pgf/number format/precision=3
    },
    grid=both,
    legend columns=1,
    legend style={
        font=\tiny,
        at={(0.02,0.98)},
        anchor=north west,
        draw=none,
        fill=white,
        fill opacity=0.80,
        text opacity=1,
        cells={anchor=west},
        inner sep=0.6pt,
        row sep=0.2pt,
    },
    legend image post style={scale=0.8},
    every axis plot/.append style={mark size=1.5pt, line width=0.9pt},
]
\addplot+[dashed, mark=*]          coordinates {(10,0.038) (100,0.176) (1000,4014.265)};
\addlegendentry{ICS-ID 4R}
\addplot+[solid, mark=square*]    coordinates {(10,0.020) (100,0.227) (1000,7360.106)};
\addlegendentry{SLS-ID 4R}
\addplot+[dashed, mark=triangle*] coordinates {(10,0.030) (100,0.108) (1000,888.227)};
\addlegendentry{ICS-DD 4R}
\addplot+[solid, mark=diamond]  coordinates {(10,0.021) (100,0.249) (1000,6940.607)};
\addlegendentry{SLS-DD 4R}
\end{axis}
\end{tikzpicture}
\vspace{-2mm}
\caption{Scaling with graph size. The number of events is equivalent to the graph size.} 
\label{fig:scale_graph_sub}
\end{subfigure}%
\hfill
\begin{subfigure}[t]{0.53\linewidth}
\centering
\begin{tikzpicture}
\begin{axis}[
    width=\linewidth,
    height=5.3cm,
    xlabel={Number of replicas},
    ylabel={Avg. runtime (s)},
    xmode=log,
    log basis x=2,
    xtick={2,4,8,16},
    xticklabels={2,4,8,16},
    ymode=log,
    yticklabel style={
        /pgf/number format/fixed,
        /pgf/number format/precision=3
    },
    grid=both,
    legend columns=2,
    legend style={
        font=\tiny,
        at={(0.00,0.50)},
        anchor=west,
        draw=none,
        fill=white,
        fill opacity=1,
        text opacity=1,
        cells={anchor=west},
        inner sep=1pt
    },
    every axis plot/.append style={mark size=1.7pt, line width=0.9pt},
]

\addplot+[dashed,  mark=*]        coordinates {(2,0.371) (4,0.369) (8,0.396) (16,0.454)};
\addlegendentry{ICS ID (1K)}
\addplot+[solid,  mark=*]  coordinates {(2,0.398) (4,0.397) (8,0.424) (16,0.471)};
\addlegendentry{SLS ID (1K)}
\addplot+[dashed, mark=square*]        coordinates {(2,55.438) (4,53.857) (8,58.123) (16,68.398)};
\addlegendentry{ICS ID (10K)}
\addplot+[solid, mark=square*]  coordinates {(2,64.279) (4,63.093) (8,66.780) (16,78.013)};
\addlegendentry{SLS ID (10K)}

\addplot+[dashed,  mark=triangle*] coordinates {(2,0.353) (4,0.361) (8,0.380) (16,0.462)};
\addlegendentry{ICS DD (1K)}
\addplot+[solid,  mark=triangle*]  coordinates {(2,0.484) (4,0.528) (8,0.522) (16,0.616)};
\addlegendentry{SLS DD (1K)}
\addplot+[dashed, mark=diamond*] coordinates {(2,41.990) (4,49.455) (8,45.117) (16,57.427)};
\addlegendentry{ICS DD (10K)}
\addplot+[solid, mark=diamond*]  coordinates {(2,73.488) (4,98.949) (8,77.828) (16,91.526)};
\addlegendentry{SLS DD (10K)}

\end{axis}
\end{tikzpicture}
\vspace{-2mm}
\caption{Scaling with the number of replicas for 1K and 10K events.}
\label{fig:scale_repl_sub}
\end{subfigure}

\vspace{0.6em}

\begin{subfigure}[t]{0.78\linewidth}
\centering
\begin{tikzpicture}
\begin{axis}[
    width=\linewidth,
    height=5.3cm,
    xlabel={Number of events},
    ylabel={Avg. runtime (s)},
    xmode=log,
    log basis x=10,
    xtick={10,100,1000,10000},
    xticklabels={10,100,1K,10K},
    ymode=log,
    yticklabel style={
        /pgf/number format/fixed,
        /pgf/number format/precision=3
    },
    grid=both,
    legend columns=2,
    legend style={
        font=\tiny,
        at={(0.02,0.98)},
        anchor=north west,
        draw=none,
        fill=white,
        fill opacity=1,
        text opacity=1,
        cells={anchor=west},
        inner sep=0.6pt,
        row sep=0.2pt,
    },
    legend image post style={scale=0.8},
    every axis plot/.append style={mark size=1.5pt, line width=0.9pt},
]

\addplot+[dashed, mark=*]         coordinates {(10,0.052) (100,0.045) (1000,0.377) (10000,58.281)};
\addlegendentry{ICS-ID 4R}
\addplot+[solid, mark=*]   coordinates {(10,0.028) (100,0.025) (1000,0.406) (10000,60.097)};
\addlegendentry{SLS-ID 4R}
\addplot+[dashed, mark=square*]        coordinates {(10,0.040) (100,0.050) (1000,0.429) (10000,67.333)};
\addlegendentry{ICS-ID 8R}
\addplot+[solid, mark=square*]  coordinates {(10,0.021) (100,0.028) (1000,0.457) (10000,73.376)};
\addlegendentry{SLS-ID 8R}

\addplot+[dashed, mark=triangle*]  coordinates {(10,0.064) (100,0.070) (1000,0.342) (10000,41.254)};
\addlegendentry{ICS-DD 4R}
\addplot+[solid, mark=triangle*]   coordinates {(10,0.039) (100,0.042) (1000,0.483) (10000,73.263)};
\addlegendentry{SLS-DD 4R}
\addplot+[dashed, mark=diamond*] coordinates {(10,0.036) (100,0.086) (1000,0.393) (10000,51.544)};
\addlegendentry{ICS-DD 8R}
\addplot+[solid, mark=diamond*]  coordinates {(10,0.024) (100,0.028) (1000,0.551) (10000,99.462)};
\addlegendentry{SLS-DD 8R}

\end{axis}
\end{tikzpicture}
\vspace{-2mm}
\caption{Scaling with the number of events. As the add and remove operations randomly interleave, the graph size varies across the runs.}
\label{fig:scale_ops_sub}
\end{subfigure}
\vspace{1mm}
\caption{Scalability of SLS and ICS for isolate-delete (ID) and detach-delete (DD).
}
\label{fig:scalability_all}
\end{figure}

%% file: figures/fig_ddlog_scalability.tex
  \begin{figure}[t]                                                             
  \centering
                                                                                
  \begin{subfigure}[t]{0.45\linewidth}
  \centering                                                                    
  \begin{tikzpicture}
  \begin{axis}[                                                                 
      width=\linewidth,
      height=5.3cm,
      xlabel={Graph size},
      ylabel={Avg. runtime (s)},                                                
      xmode=log,
      log basis x=10,                                                           
      xtick={10,100,1000},
      xticklabels={15,600,1K},
      ymode=log,                                                                
      yticklabel style={
          /pgf/number format/fixed,                                             
          /pgf/number format/precision=3
      },
      grid=both,
      legend columns=1,
      legend style={                                                            
          font=\tiny,
          at={(0.02,0.98)},                                                     
          anchor=north west,
          draw=none,
          fill=white,
          fill opacity=0.80,
          text opacity=1,
          cells={anchor=west},                                                  
          inner sep=0.6pt,
          row sep=0.2pt,                                                        
      },          
      legend image post style={scale=0.8},
      every axis plot/.append style={mark size=1.5pt, line width=0.9pt},
  ]                                                                             
  
  \addlegendimage{solid, black, mark=*, mark size=1.5pt, line width=0.9pt}
  \addlegendentry{Single}                                                       
  \addlegendimage{dashed, black, mark=*, mark options={fill=white}, mark
  size=1.5pt, line width=0.9pt}                                                 
  \addlegendentry{Incremental}
                                                                                
  \addplot[solid, blue, mark=*, mark size=1.5pt, line width=0.9pt]
    coordinates {(10,0.052)(100,1.380)};                                        
  \addlegendentry{ICS-ID}                                                       
  \addplot[solid, red, mark=square*, mark size=1.5pt, line width=0.9pt]         
    coordinates {(10,0.056)(100,2.646)};                                        
  \addlegendentry{SLS-ID}
  \addplot[solid, brown!60!black, mark=triangle*, mark size=1.5pt, line         
  width=0.9pt]                                                                  
    coordinates {(10,0.055)(100,0.553)};
  \addlegendentry{ICS-DD}                                                       
  \addplot[solid, orange, mark=diamond*, mark size=1.5pt, line width=0.9pt]     
    coordinates {(10,0.060)(100,3.524)};
  \addlegendentry{SLS-DD}                                                       
                  
  \addplot[dashed, blue, mark=*, mark options={fill=white}, mark size=1.5pt,
  line width=0.9pt, forget plot]                                                
    coordinates {(10,0.081)(100,6.411)};
  \addplot[dashed, red, mark=square*, mark options={fill=white}, mark           
  size=1.5pt, line width=0.9pt, forget plot]                                    
    coordinates {(10,0.073)(100,6.234)};
  \addplot[dashed, brown!60!black, mark=triangle*, mark options={fill=white},   
  mark size=1.5pt, line width=0.9pt, forget plot]                               
    coordinates {(10,0.082)(100,4.347)};
  \addplot[dashed, orange, mark=diamond*, mark options={fill=white}, mark       
  size=1.5pt, line width=0.9pt, forget plot]
    coordinates {(10,0.081)(100,10.644)};                                       
                                                                                
  \end{axis}
  \end{tikzpicture}                                                             
  \vspace{-2mm}   
  \caption{Scaling with graph size}
  \label{fig:sinc_graph_sub}
  \end{subfigure}%
  \hfill
  \begin{subfigure}[t]{0.53\linewidth}
  \centering
  \begin{tikzpicture}                                                           
  \begin{axis}[
      width=\linewidth,                                                         
      height=5.3cm,
      xlabel={Number of replicas},
      ylabel={Avg. runtime (s)},
      xmode=log,
      log basis x=2,                                                            
      xtick={2,4,8,16},
      xticklabels={2,4,8,16},                                                   
      ymode=log,  
      yticklabel style={
          /pgf/number format/fixed,
          /pgf/number format/precision=3                                        
      },
      grid=both,                                                                
      legend columns=3,
      legend style={
          font=\tiny,
          at={(0.00,0.45)},
          anchor=west,
          draw=none,
          fill=white,                                                           
          fill opacity=1,
          text opacity=1,                                                       
          cells={anchor=west},
          inner sep=1pt,
      },
      legend image post style={scale=0.8},
      every axis plot/.append style={mark size=1.5pt, line width=0.9pt},        
  ]
                                                                                
  \addlegendimage{solid, black, mark=*, mark size=1.5pt, line width=0.9pt}
  \addlegendentry{Single}                                                       
  \addlegendimage{dashed, black, mark=*, mark options={fill=white}, mark
  size=1.5pt, line width=0.9pt}                                                 
  \addlegendentry{Incremental}
                                                                                
  \addplot[solid, blue, mark=*, mark size=1.5pt, line width=0.9pt]              
    coordinates {(2,2.183)(4,2.535)(8,2.652)(16,3.553)};
  \addlegendentry{ICS-ID 1K}                                                    
  \addplot[solid, red, mark=square*, mark size=1.5pt, line width=0.9pt]
    coordinates {(2,3.970)(4,2.699)(8,4.671)(16,3.427)};                        
  \addlegendentry{SLS-ID 1K}                                                    
  \addplot[solid, green!60!black, mark=diamond*, mark size=1.5pt, line          
  width=0.9pt]                                                                  
    coordinates {(2,1.774)(4,1.737)(8,1.658)(16,2.150)};
  \addlegendentry{ICS-DD 1K}                                                    
  \addplot[solid, orange, mark=pentagon*, mark size=1.5pt, line width=0.9pt]
    coordinates {(2,415.264)(4,397.948)(8,466.507)(16,480.736)};                
  \addlegendentry{SLS-DD 1K}                                                    
  \addplot[solid, cyan!60!black, mark=*, mark size=1.5pt, line width=0.9pt]     
    coordinates {(2,388.162)(4,488.751)(8,619.419)(16,755.692)};                
  \addlegendentry{ICS-ID 10K}
  \addplot[solid, brown!60!black, mark=triangle*, mark size=1.5pt, line         
  width=0.9pt]    
    coordinates {(2,711.610)(4,512.340)(8,882.261)(16,745.740)};                
  \addlegendentry{SLS-ID 10K}                                                   
  \addplot[solid, violet, mark=diamond*, mark size=1.5pt, line width=0.9pt]
    coordinates {(2,300.575)(4,321.112)(8,312.485)(16,441.953)};                
  \addlegendentry{ICS-DD 10K}
                                                                                
  \addplot[dashed, blue, mark=*, mark options={fill=white}, mark size=1.5pt,
  line width=0.9pt, forget plot]                                                
    coordinates {(2,11.119)(4,6.970)(8,6.606)(16,6.625)};
  \addplot[dashed, red, mark=square*, mark options={fill=white}, mark           
  size=1.5pt, line width=0.9pt, forget plot]
    coordinates {(2,15.750)(4,13.477)(8,9.675)(16,13.391)};                     
  \addplot[dashed, green!60!black, mark=diamond*, mark options={fill=white},    
  mark size=1.5pt, line width=0.9pt, forget plot]                               
    coordinates {(2,6.522)(4,5.621)(8,5.344)(16,5.357)};                        
  \addplot[dashed, orange, mark=pentagon*, mark options={fill=white}, mark      
  size=1.5pt, line width=0.9pt, forget plot]                                    
    coordinates {(2,1063.087)(4,676.919)(8,579.908)(16,829.715)};
  \addplot[dashed, cyan!60!black, mark=*, mark options={fill=white}, mark       
  size=1.5pt, line width=0.9pt, forget plot]                                    
    coordinates {(2,3374.203)(4,2121.370)(8,3401.809)(16,2329.569)};
  \addplot[dashed, brown!60!black, mark=triangle*, mark options={fill=white},   
  mark size=1.5pt, line width=0.9pt, forget plot]
    coordinates {(2,3183.548)(4,4202.233)(8,4514.583)(16,4392.524)};            
  \addplot[dashed, violet, mark=diamond*, mark options={fill=white}, mark       
  size=1.5pt, line width=0.9pt, forget plot]                                    
    coordinates {(2,1843.237)(4,2602.362)(8,2091.409)(16,1955.357)};            
                                                                                
  \end{axis}      
  \end{tikzpicture}    
  \vspace{-6.5mm}                                                            
  \caption{Scaling with the number of replicas for 1K and 10K events}                                 
  \label{fig:sinc_repl_sub}
  \end{subfigure}                                                               
                  
  \vspace{0.6em}                                                                
   
  \begin{subfigure}[t]{0.78\linewidth}
  \centering
  \begin{tikzpicture}
  \begin{axis}[
      width=\linewidth,
      height=5.3cm,                                                             
      xlabel={Number of events},
      ylabel={Avg. runtime (s)},                                                
      xmode=log,  
      log basis x=10,
      xtick={10,100,1000,10000},
      xticklabels={10,100,1K,10K},                                              
      ymode=log,
      yticklabel style={                                                        
          /pgf/number format/fixed,
          /pgf/number format/precision=3
      },
      grid=both,
      legend columns=2,                                                         
      legend style={
          font=\tiny,                                                           
          at={(0.02,0.98)},
          anchor=north west,
          draw=none,
          fill=white,
          fill opacity=1,
          text opacity=1,                                                       
          cells={anchor=west},
          inner sep=0.6pt,                                                      
          row sep=0.2pt,
      },
      legend image post style={scale=0.8},
      every axis plot/.append style={mark size=1.5pt, line width=0.9pt},
  ]                                                                             
   
  \addlegendimage{solid, black, mark=*, mark size=1.5pt, line width=0.9pt}
  \addlegendentry{Single}
  \addlegendimage{dashed, black, mark=*, mark options={fill=white}, mark
  size=1.5pt, line width=0.9pt}                                                 
  \addlegendentry{Incremental}
                                                                                
  \addplot[solid, blue, mark=*, mark size=1.5pt, line width=0.9pt]
    coordinates {(10,0.073)(100,0.105)(1000,2.104)(10000,384.430)};             
  \addlegendentry{ICS-ID 4R}                                                    
  \addplot[solid, red, mark=*, mark size=1.5pt, line width=0.9pt]               
    coordinates {(10,0.075)(100,0.082)(1000,3.343)(10000,695.349)};             
  \addlegendentry{SLS-ID 4R}                                                    
  \addplot[solid, cyan!60!black, mark=square*, mark size=1.5pt, line            
  width=0.9pt]                                                                  
    coordinates {(10,0.071)(100,0.088)(1000,2.912)(10000,455.928)};
  \addlegendentry{ICS-ID 8R}                                                    
  \addplot[solid, magenta, mark=square*, mark size=1.5pt, line width=0.9pt]     
    coordinates {(10,0.067)(100,0.091)(1000,4.467)(10000,654.806)};             
  \addlegendentry{SLS-ID 8R}                                                    
  \addplot[solid, brown!60!black, mark=triangle*, mark size=1.5pt, line
  width=0.9pt]                                                                  
    coordinates {(10,0.073)(100,0.085)(1000,1.830)(10000,300.134)};
  \addlegendentry{ICS-DD 4R}                                                    
  \addplot[solid, orange, mark=triangle*, mark size=1.5pt, line width=0.9pt]
    coordinates {(10,0.057)(100,0.451)(1000,555.965)};                          
  \addlegendentry{SLS-DD 4R}
  \addplot[solid, violet, mark=diamond*, mark size=1.5pt, line width=0.9pt]     
    coordinates {(10,0.081)(100,0.086)(1000,1.887)(10000,504.147)};
  \addlegendentry{ICS-DD 8R}                                                    
  \addplot[solid, black!50, mark=diamond*, mark size=1.5pt, line width=0.9pt]
    coordinates {(10,0.054)(100,0.446)(1000,554.633)};                          
  \addlegendentry{SLS-DD 8R}
                                                                                
  \addplot[dashed, blue, mark=*, mark options={fill=white}, mark size=1.5pt,    
  line width=0.9pt, forget plot]
    coordinates {(10,0.106)(100,0.220)(1000,6.860)(10000,3436.117)};
  \addplot[dashed, red, mark=*, mark options={fill=white}, mark size=1.5pt, line
   width=0.9pt, forget plot]                                                    
    coordinates {(10,0.066)(100,0.209)(1000,15.603)(10000,4824.000)};           
  \addplot[dashed, cyan!60!black, mark=square*, mark options={fill=white}, mark 
  size=1.5pt, line width=0.9pt, forget plot]                                    
    coordinates {(10,0.082)(100,0.219)(1000,9.443)(10000,2466.741)};            
  \addplot[dashed, magenta, mark=square*, mark options={fill=white}, mark       
  size=1.5pt, line width=0.9pt, forget plot]
    coordinates {(10,0.068)(100,0.190)(1000,14.307)(10000,4616.087)};           
  \addplot[dashed, brown!60!black, mark=triangle*, mark options={fill=white},   
  mark size=1.5pt, line width=0.9pt, forget plot]                               
    coordinates {(10,0.085)(100,0.194)(1000,8.890)(10000,2892.565)};            
  \addplot[dashed, orange, mark=triangle*, mark options={fill=white}, mark      
  size=1.5pt, line width=0.9pt, forget plot]
    coordinates {(10,0.078)(100,0.702)(1000,1017.490)};                         
  \addplot[dashed, violet, mark=diamond*, mark options={fill=white}, mark       
  size=1.5pt, line width=0.9pt, forget plot]                                    
    coordinates {(10,0.073)(100,0.172)(1000,7.233)(10000,2276.695)};            
  \addplot[dashed, black!50, mark=diamond*, mark options={fill=white}, mark     
  size=1.5pt, line width=0.9pt, forget plot]                                    
    coordinates {(10,0.077)(100,1.045)(1000,1058.401)};                         
                                                                                
  \end{axis}      
  \end{tikzpicture}                                                             
  \vspace{-2mm}   
  \caption{Scaling with the number of events}
  \label{fig:sinc_ops_sub}
  \end{subfigure}
  \vspace{1mm}                                                                  
  \caption{Single vs.\ incremental SLS and ICS for  
  isolate-delete (ID) and detach-delete (DD).}                               
  \label{fig:single_inc_all}                                                    
  \end{figure}    

%% file: conclusions.tex


We introduced \textsc{CRDTlog}, a declarative framework for specifying and reasoning about CRDT semantics over operation contexts, cleanly separating specification-level semantics (SLS) from implementation-level compositional semantics (ICS) via CRDT composition and transformation rules. Both are implemented in Datalog with modular support for basic, nested, and composite CRDTs.

Evaluation on two collaborative graph applications showed that ICS consistently matches SLS across all test configurations. Scalability is driven mainly by execution size, with replica count having limited impact; for larger workloads, ICS often outperforms SLS, making it well suited for large-scale testing and semantic prototyping.

As future work, we plan to formalize \textsc{CRDTlog} in a theorem prover to enable machine-checked proofs of additional semantic properties.

\section*{Acknowledgements}
This work was partially supported by the VERDI grant ANR-24-CE25-1109 (Dumbrava).




%% file: appendix.tex
\appendix

\section{Evaluation details}
\label{appendix:details}

This section reports additional details of the experimental evaluation summarized in Section \ref{experiments}. In particular, it provides the full quantitative results of the semantic equivalence tests between specification-level semantics (SLS) and implementation-level compositional semantics (ICS) for both graph variants. The table reports average execution times across configurations, complementing the scalability plots in the main text and supporting the empirical claims about correctness and performance.

\begin{table}[H]
\centering
\caption{Semantic Equivalence. Avg. execution time (s) per run, 1K runs/ category.} 
\begin{tabular}{lcccccccc}
\toprule
\#Ops & ICS ID& SLS ID & ICS DD & SLS DD & ICS ID& SLS ID& ICS DD& SLS DD\\
& 5R & 5R & 5R & 5R & 10R & 10R & 10R & 10R \\
\midrule
20   & 0.029  & 0.015 &0.018&0.015 & 0.023  & 0.012 &0.017&0.014 \\
50  & 0.018  & 0.011 &0.017&0.015  & 0.025  & 0.014  &0.018&0.016 \\
100 & 0.031  & 0.020 &0.019&0.017 & 0.025  & 0.016 &0.019&0.018 \\
1000  & 0.377 & 0.485 &0.343&0.531 & 0.454 & 0.501 &0.383&0.555\\
\bottomrule
\label{tab:se}
\end{tabular}
\end{table}

\section{Proof details}
\label{appendix:theorems}

This section presents formal proofs that both the SLS and ICS of the detach-delete graph preserve the no-dangling-edges invariant under all valid operation contexts.

\begin{lemma}
For the SLS of the detach-delete (DD) graph, the invariant of no dangling edges holds for any valid operation context.
\end{lemma}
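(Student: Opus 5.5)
The plan is to unfold the DD edge predicate (\Cref{eq:graphddADTEdges}) and show directly that the node predicate (\Cref{eq:graphADT}) holds for both endpoints. Suppose $\F{DDGraph\_SLS}(\hasE{n}{m},C)=\text{true}$. This gives an event $e$ with $\addE{n}{m}\in\op(e)$, no $\rmvE{n}{m}$ observing $e$, and the two guards $\Gamma(e,n)$ and $\Gamma(e,m)$. By symmetry it suffices to prove $\F{DDGraph\_SLS}(\hasN{n},C)=\text{true}$; the argument for $m$ is identical.

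\textbf{Step 1: fix validity.} I will take ``valid operation context'' to mean that $\vis$ is transitive (we work with its closure, as in the Datalog encoding), and that every event satisfies the preconditions of \Cref{tab:ID-graph}. Preconditions are evaluated in the state obtained by applying the node semantics to the causal past $\{x \mid x\visrel e\}$ of the event. Applied to $e$, this gives $n\in V$ in the causal past of $e$. Concretely, there is an event $a\visrel e$ with $\addN{n}\in\op(a)$ such that, restricted to the past of $e$, either no $\rmvN{n}$ observes $a$, or every such removal is defeated by an $\addE{}{}$ incident to $n$ that is concurrent with it.

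\textbf{Step 2: case split on removals of $n$ that observe $a$.} I use $a$ as the witness for $\hasN{n}$. If no $\rmvN{n}$ event $s$ satisfies $a\visrel s$, the first disjunct of \Cref{eq:graphADT} holds and we are done. Otherwise fix such an $s$ and case on how $s$ relates to $e$ under visibility.
\begin{itemize}
\item If $e\visrel s$, this directly contradicts $\Gamma(e,n)$, so the case is impossible.
\item If $s$ and $e$ are concurrent ($e\not\visrel s$ and $s\not\visrel e$), then $e$ itself is an $\addE{n}{m}$ concurrent with the removal $s$. The second disjunct of \Cref{eq:graphADT} therefore holds with $e':=s$ and $e'':=e$.
\item If $s\visrel e$, then $s$ lies in the causal past of $e$. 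By the choice of $a$ in Step 1, $s$ must be defeated there by some $\addE{}{}$ event $e''$ incident to $n$ and concurrent with $s$. Since $e''\in \E$, the second disjunct again holds.
\end{itemize}
Both surviving cases yield $\hasN{n}$, so the edge $(n,m)$ cannot dangle.

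\textbf{Main obstacle.} The delicate point is Step 1 together with the last case, $s\visrel e$. The edge guard $\Gamma$ only forbids removals that \emph{observe} $e$, so it says nothing about removals preceding $e$. Those have to be handled by the precondition of $\addE{}{}$, which means the notion of ``valid context'' must be stated precisely enough to yield the right witness $a$. I will first try the formulation above, where the precondition is the SLS node predicate evaluated on the causal past. If that proves awkward, the fallback is to choose $a$ as a $\vis$-maximal $\addN{n}$ in the past of $e$ that is not observed by any undefeated removal in that past. In either version one must also check that the concurrency between $s$ and the defeating $e''$, established within the past of $e$, still holds in the full context. This is immediate, because visibility between two fixed events does not depend on which other events are present.
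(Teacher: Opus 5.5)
Your proof is correct, but it runs in the opposite direction from the paper's. The paper fixes a removal $\rmvN{n}$ and splits on whether some incident \texttt{addE} is concurrent with it. If none is, it argues that $\Gamma(e,n)$ fails for every incident edge-add $e$, so all incident edges vanish. If one is, the node predicate of \Cref{eq:graphADT} keeps $n$.

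You instead fix a surviving edge and obtain an \texttt{addN} witness $a$ from the \texttt{addE} precondition evaluated on the causal past of $e$. You then classify each removal $s$ observing $a$ by the trichotomy $e\visrel s$, concurrent, $s\visrel e$, which is exhaustive by acyclicity.

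Both proofs rest on the same two facts: $\Gamma$ rules out removals that observe the edge-add, and a concurrent incident \texttt{addE} rescues the node. Your route, however, makes explicit two things the paper leaves implicit:
\begin{itemize}
\item why a dangling edge needs a node removal at all: the endpoint must have been added, which is exactly what the precondition supplies;
\item the case of a removal that precedes the edge addition ($s\visrel e$, e.g.\ $n$ removed, re-added, and then given an edge). The paper's Case 1 skips this by asserting that the removal is visible after the add-edge event.
\end{itemize}
The paper's organisation, in turn, follows the conflict scenarios directly and parallels the case split used for the ICS lemma.

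One consistency remark. Step 1 paraphrases the node predicate universally: every removal observing $a$ is defeated, as in the Datalog rule via \texttt{not postVisRmvN}. By contrast, \Cref{eq:graphADT} as displayed is existential in $e',e''$. Your argument works under either reading, provided you use the same one for the precondition and the conclusion. Under the existential reading, in the case $s\visrel e$, use the defeated removal that the precondition provides instead of $s$.
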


\begin{proof}
Let $C=(\E,\op,\vis,\ar)$ be a well-formed operation context satisfying the input
preconditions of the DD graph $(V,A)$, i.e., \addE{$u$}{$v$} and \rmvN{$n$} are issued only if $n$, $u$, and $v$ are observed at the issuing replica. 
We show the resulting graph has no dangling edges.

A dangling edge can only arise due to a node-removal event, so we analyze executions
in which some \rmvN{$n$} occurs. We fix such a node $n$ and proceed by
case distinction. Throughout, we rely on \Cref{eq:graphddADTEdges}: an edge $(u,v)$ is present
only if it is added, not observed to be removed, and $\Gamma(e,u)\land \Gamma(e,v)$ holds
for the corresponding add-edge event $e$.

\begin{description}
  \item[Case 1.] The user removes node $n$, and there is no concurrent \addE{$n$}{\_} or \addE{\_}{$n$}.\\

For this case, we need to consider three subcases: if $n$ has no incident edges, if $n$ has outgoing edges, and if $n$ has incoming edges. 

\begin{itemize}
    \item $n$ has no incident edges.
    
 Then, for every $(u,v)$, we have $u\neq n$ and $v\neq n$, hence no edge can be
  incident to $n$ in the resulting state. Thus, the invariant holds.
  \item $n$ has outgoing edges. 
  
  Since $\rmvN{n}$ exists and is visible after the add-edge event along that branch, $\Gamma(e,n)$:
  $\neg\exists s\in E:\ \text{op}(s)=\mathsf{rmvN}(n)\land e\visrel s$,
  is false. This contradicts $(n,m)\in A$. Hence, no edge outgoing from $n$ is present in the resulting state, so no dangling edge arises. This holds for any outgoing edge candidate, $(n,m)$
  \item $n$ has incoming edges. 
  
  This subcase is analogous to the previous one. 
\end{itemize}

  \item[Case 2.] User $B$ removes node $n$, while concurrently user $C$ adds an edge
  incident to $n$ (i.e., $\addE{n}{m}$ or $\addE{m}{n}$).\\

  Let $s$ be a remove-node event with $\op(s)=\rmvN{n}$, and let $s'$
  be a concurrent add-edge event incident to $n$ as above. Then $\Gamma(e,n)$ evaluates to false for any edge to and from $n$ that is not $(n,m)$. For $(n,m)$ it would evaluate to true, as the \rmvN{$n$} is concurrent and not visible. 
  In this situation, however, the node predicate \Cref{eq:graphADT} also keeps
  $n$ present: if a visible $\rmvN{n}$ exists but there is a concurrent
  incident add-edge, then $\F{GraphDD\_SLS}(\hasN{n}, \ldots)=\mathsf{true}$. Hence, $n \in V$ and the edge $(n,m)$ will be in the resulting state, but none of the previous edges to and from $n$.T herefore, the edge $(n,m)$ that remains incident to $n$ is not dangling.\\
\end{description}

Finally, concurrent removals of the same node, removals of distinct nodes, or removals
interleaved with additions of unrelated nodes do not introduce additional ways of
creating an edge whose endpoint is absent: edge presence is always guarded by the
endpoint conditions $\Gamma(e,\cdot)$ in~\Cref{eq:graphddADTEdges} together with the
node predicate~\Cref{eq:graphADT}. Therefore, no resulting state from a operation context contains a dangling edge.
\end{proof}

\begin{lemma}
For the ICS of the detach-delete (DD) graph, the invariant of no dangling edges holds for any valid operation context.
\end{lemma}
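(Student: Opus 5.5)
We argue by contradiction. Suppose some edge $(u,v)$ is present in the ICS result while one of its endpoints, say $n\in\{u,v\}$, is absent. By \Cref{eq:graphddicsedge}, presence of $(u,v)$ means that some $\edgeset.\add{u,v}$ event $e$ survives in the add-wins edge set; by \Cref{fig:transformation_dd} such an event comes only from an $\addE{u}{v}$ at $e$, and no $\edgeset.\del{u,v}$ event $s$ satisfies $e\visrel s$. The node set has the same transformation as in the ID case, so absence of $n$ means that every $\nodeset.\add{n}$ event is observed by some $\nodeset.\del{n}$, i.e.\ by some $\rmvN{n}$ event. The plan is to play these two facts against each other, case-splitting as in the SLS lemma on whether a $\rmvN{n}$ event observes $e$.

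\emph{Step 1 (the add-edge event feeds the node set).} Because $\addE{u}{v}$ also translates to $\nodeset.\add{u}$ and $\nodeset.\add{v}$ in the same event $e$, the event $e$ is itself a $\nodeset.\add{n}$ event. Since $n$ is absent, add-wins semantics (\Cref{eq:awset}) give some $s$ with $\rmvN{n}\in\op(s)$ and $e\visrel s$. Consequently the case ``concurrent $\addE{}{}$ only'' cannot occur for $e$. If every incident add-edge event is concurrent with all removals, the transformed $\add{n}$ wins and $n$ is retained, which is exactly the second case of the sketch.

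\emph{Step 2 (the observing removal deletes the edge).} Take $s$ from Step 1. By the transformation for $\rmvN{n}$ in \Cref{fig:transformation_dd}, realised in \Cref{lst:graphddicsedges} through \texttt{preVisAddOutE} or \texttt{preVisAddInE}, $s$ emits $\edgeset.\del{u,v}$ for every $\addE{u}{v}$ event visible to $s$ with $n$ as source or target. Event $e$ is such an event, so $s$ carries $\del{u,v}$ and $e\visrel s$. This contradicts survival of $e$ in the edge set. Since $e$ was an arbitrary surviving add event for $(u,v)$, the edge is absent, contradicting the assumption.

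\emph{Main obstacle.} The delicate step is making the set $\op(s)$ precise. \Cref{fig:transformation_dd} defines it only informally (``where $m$ is a node to/from which an edge exists''), so we must fix the reading that $s$ deletes all edges whose $\addE{}{}$ is visible to $s$, matching the Datalog predicates. The gap between the formal $\visrel$ (which here must be transitive) and the direct \dtinline{vis} input is closed by noting that \dtinline{visible} is its transitive closure, and that $e\neq s$ because their operations differ. A second point to settle is that the edge set and the node set use the same visibility relation on the same event identifiers, since transformed operations inherit their event. The proof uses the validity preconditions only to ensure that events are well formed; unlike the SLS proof, no appeal to \Cref{eq:graphADT} is needed.
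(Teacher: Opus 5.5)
Your proof is correct, and it is organized differently from the paper's. The paper fixes a removal $\rmvN{n}$ and splits on what runs concurrently with it. Its cases are: no concurrent incident \texttt{addE}; a concurrent incident \texttt{addE}, whose transformed $\add{n}$ wins in the node set; and a concurrent $\rmvE{n}{m}$, which creates no add/remove conflict in the edge set. It then dismisses combinations of several removals and additions in a closing informal remark.

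You instead fix a surviving $\edgeset.\add{u,v}$ event $e$ and use the fact that this same event re-adds both endpoints to the node set. Absence of an endpoint $n$ then forces some $\rmvN{n}$ event $s$ with $e\visrel s$. Under the visibility-relative reading of the $\texttt{rmvN}$ transformation fixed by \texttt{preVisAddOutE}/\texttt{preVisAddInE}, $s$ emits exactly the $\del{u,v}$ that cancels $e$.

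Quantifying over the surviving add event rather than over removals handles arbitrary interleavings at once. An example is a second $\rmvN{n}$ that observes the concurrent \texttt{addE} of the paper's Case~2: then $n$ need not be retained, yet the edge is still deleted. Your route also needs no catch-all paragraph, and it shows that the validity preconditions are not needed for this lemma. The paper's case split, in exchange, mirrors the SLS proof and spells out the individual concurrency scenarios a designer cares about.

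A small point: $e\neq s$ follows directly from acyclicity of $\visrel$. That is cleaner than appealing to the two events carrying different operations, since events may in general carry sets of operations.
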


\begin{proof}
Let $C=(\E,\op,\vis,\ar)$ be a well-formed operation context satisfying the input
preconditions of the DD graph (i.e., \addE{$u$}{$v$} and \rmvN{$n$} are issued only if $n$, $u$, and $v$ are observed at the issuing replica). 
We show that the resulting graph has no dangling edges.

A dangling edge can only arise due to a node-removal event, so we analyze executions where a \rmvN{n} occurs. We fix such a node $n$ and do a case analysis. We rely on \Cref{fig:transformation_dd} and \Cref{eq:graphddicsedge}: an edge $(u,v)$ is present only if it is in the edges set.

\begin{description}
  \item[Case 1.] The user removes node $n$ and there is  no
  concurrent $\addE{n}{\_}$ or $\addE{\_}{n}$.

We again consider three subcases: if $n$ has no incident edges, if $n$ has outgoing edges, and if $n$ has incoming edges. 

\begin{itemize}
    \item $n$ has no incident edges.
    
 Then, for every $(u,v)$, we have $u\neq n$ and $v\neq n$; hence, no edge can be
  incident to $n$ in the resulting state. Thus, the invariant holds.
  
  \item $n$ has outgoing edges.
  
    Each \rmvN{$n$} is translated to a \del{$n,m$} for each outgoing edge in the underlying set, thus removing all outgoing edges. As there is no concurrent operation that adds anything to the set, there is no \texttt{add} to win over the removal. Thus, the resulting state has no dangling edges. 
  
  \item $n$ has incoming edges.

  This case is analogous to the previous one; \rmvN{$n$} is also translated to a \del{$m,n$} for every incoming edge. 
\end{itemize}

  \item[Case 2.] The user removes node $n$, while concurrently, user $B$ adds an edge
  incident to $n$ (i.e., $\addE{n}{m}$ or $\addE{m}{n})$.

  \rmvN{$n$} translates to a \del{$n$} in the nodes set and \addE{$n$}{$m$} translates to \{\add{$n$}, \add{$m$}\}. In the concurrency conflict between \del{$n$} and \add{$n$}, the latter would win. Thus, the resulting state would contain $n$; hence, this case does not cause dangling edges.

  \item[Case 3.] User $B$ removes node $n$, while concurrently, user $C$ removes an
  edge between nodes $n$ and $m$.

  In this case, the removal of $n$ would be successful, as \rmvE{$n$}{$m$} does not have a transformation into the nodes set. This case would not cause any conflicts (add vs. remove) in the edges set CRDT either, because concurrent removals do not conflict. Thus, the resulting state will contain neither $n$ nor any dangling edges from or to $n$. 

\end{description}

Finally, concurrent removals of the same node, removals of distinct nodes, or removals
interleaved with additions of unrelated nodes do not introduce additional ways of
creating an edge whose endpoint is absent. Hence, no resulting state from an operation context can contain a dangling edge.    
\end{proof}